\newtheorem{theorem}{Theorem}
\newtheorem{lemma}[theorem]{Lemma}
\newtheorem*{lemma*}{Lemma}
\newtheorem{proposition}[theorem]{Proposition}
\newtheorem{definition}[theorem]{Definition}
\newcommand{\Z}{\mathbb{Z}}
\newcommand{\R}{\mathbb{R}}
\newcommand{\E}{\mathbb{E}}
\newcommand{\caO}{{\mathcal O}}
\newcommand{\hf}{\frac{_1}{^2}}
  \newtheorem{remark}[theorem]{Remark}
\newcommand{\beq}{ \begin{equation} }
\newcommand{\eeq}{ \end{equation} }
\newcommand{\bet}{ \begin{theorem} }
\newcommand{\eet}{ \end{theorem} }
\newcommand{\baq}{\begin{eqnarray}}
\newcommand{\eaq}{\end{eqnarray}}
\title[Analysis of the LME recursion]{Towards Rigorous Analysis of the Levitov-Mirlin-Evers recursion}
\author[Y.V. Fyodorov]{Y. V. Fyodorov}
\address{Queen Mary University of London, School  of Mathematical Sciences, London, E1 4NS, United Kingdom}
\email{y.fyodorov@qmul.ac.uk}
\author[A. Kupiainen]{A. Kupiainen }
\address{University of Helsinki, Department of Mathematics and Statistics, P.O. Box 68 , FIN-
00014 University of Helsinki, Finland}
\email{antti.kupiainen@helsinki.fi}
\author[C. Webb]{C. Webb}
\address{Department of mathematics and systems analysis, Aalto University, PO Box 11000, 00076 Aalto, Finland}
\email{christian.webb@aalto.fi}
\date{\today}
\begin{document}
\begin{abstract}
This paper aims to develop a rigorous asymptotic analysis of an approximate renormalization group recursion for inverse participation ratios $P_q$ of critical powerlaw random band matrices. The recursion goes back to the work by Mirlin and Evers \cite{em} and earlier works by Levitov \cite{lev1,lev2} and is aimed to describe the ensuing multifractality of the eigenvectors of such matrices. We point out both similarities and dissimilarities between the LME recursion and those appearing in the theory of multiplicative cascades and branching random walks and show that the methods developed in those fields can be adapted to the present case. In particular the LME recursion is shown to exhibit a phase transition, which we expect is a freezing transition, where the role of temperature is played by the exponent $q$. However, the LME recursion has features that make its rigorous analysis considerably harder and we point out several open problems for further study.
\end{abstract}

\maketitle

\section{Introduction}

The goal of this paper is to study the asymptotic behavior of a stochastic branching recursion (which we refer to as the Levitov-Mirlin-Evers (LME) recursion) conjectured to be related to multifractal properties of the Anderson transition. More precisely,
building on an earlier work of Levitov  \cite{lev1,lev2} Mirlin and Evers  argued  in \cite{em}
that in the so-called critical power law random banded matrix model (PRBM), for small values of an auxiliary variance parameter, the inverse participation ratio (essentially the $\ell^p$ norm) of a typical eigenvector of the random matrix approximately satisfies the LME-recursion. By analyzing this recursion  they further argued that the inverse participation ratio scales in an anomalous (multifractal) way with the size of the matrix. Let us also mention that a related, though less explicitly developed renormalization procedure was outlined for essentially the same model in \cite{ps}, with similar conclusions.

\vspace{0.3cm}

Our goal is to rigorously study this recursion and prove the conjectured scaling behavior. Our approach is based on the fact that this recursion resembles ones appearing in the study of weighted branching processes such as the branching random walk. These recursions, their applications and related models appear for example in other areas of probability, the study of algorithms, mathematical physics and finance and have been studied extensively \cite{mandelbrot,kah,big,rderev,gmc,bkm}. While being similar to recursions arising in other applications, the LME-recursion has also important differences. In particular, we are not able to prove in full generality the type of results that are available for more traditional branching recursion and this leaves various open questions that might be interesting for understanding better the multifractal geometry of these eigenvectors.

\vspace{0.3cm}

Thus in addition to studying the model itself, this note aims to bridge the gap between the Theoretical
 Physics and Probability communities in both directions. Namely, on one hand we are presenting to readers from the physics community with some of the results and tools used in the Probability communities to study some multifractal geometric objects. On the other hand we present to readers from the Mathematics community with a new type of stochastic recursion which has important applications in random matrix theory and its relations to physics of disordered systems and is an interesting and rich object in itself. In particular, there are many open questions we are not yet able to answer.

\vspace{0.3cm}

In Section 2 we briefly recall the critical power law random band matrix model and the heuristic derivation of the LME recursion for the  inverse participation ratios. In Section 3 we recall branching recursions arising in the study of branching random walks and the type of results and tools one has there. Section  4 studies the scaling properties of the moments of the solution of the LME recursion. We deduce the existence of a phase transition in the model, namely for small values of the index of the participation ratio, the participation ratio lives on the scale of its mean, while for large values it does not.
Using the knowledge on the scaling behavior of the moments, we are able to prove (using the method of moments) that for small values of the participation ratio index, the correctly normalized inverse participation ratio converges in law to a non-trivial random variable whose law can be characterized in terms of a non-linear integro-differential equation. Finally,
in Section 8 we discuss some open questions suggested by the connection with branching random walks.

\vspace{0.3cm}

{\bf Acknowledgements:} Y.F and A.K. are grateful to the Institute for Advanced Study, Princeton
for its kind hospitality during the initial stage of the project.
 \rm The research by A.K. and C.W.  was supported by the Academy of Finland, and Y.F. was supported by
  EPSRC grant EP/J002763/1 ``\textsf{Insights into Disordered Landscapes via Random Matrix Theory and Statistical Mechanics}''. C.W.  wishes to thank Nathana\"el Berestycki for interesting discussions related to this note. We also wish to thank an anonymous referee for careful reading of a preliminary version of the article, as well as helpful comments.

\section{The model}\label{sec:model}

To motivate our stochastic recursion, we recall the model and argument presented in \cite{em}, based on the earlier work in \cite{lev1}. We consider a power-law random band matrix model \cite{prbm} - that is random $N\times N$ real symmetric matrices $H$ with independent entries whose variance  decays with the distance from the diagonal in a power law fashion:
\begin{equation}\label{model}
\E(H_{ij})^2=\left\{
 \begin{array}{cl}
  1 & |i-j|<b\\
   (\frac{b}{|i-j|} )^{2\alpha}& |i-j|>b
 \end{array} \right. ,
\end{equation}
with $b$ being a coupling constant controlling for a fixed $\alpha$ the relative strength of fluctuations of the entries close to the main diagonal and away from it. In particular, for $b\gg 1$ there is a well-defined band of the width $b$ around the diagonal where the variance remains constant, and then decays in a powerlaw way outside that band. In the opposite case $b\to 0$ the condition $|i-j|<b$ implies $i=j$
so that the matrix has a dominating diagonal with a small admixture of off-diagonal entries with powerlaw-decaying amplitude.

Such a model for $\alpha=1$ is considered in Theoretical Physics literature as a paradigmatic example of {\it critical} random matrix models with multifractal eigenvectors, see e.g. \cite{Kravtsov} for a recent discussion. As such it played in the last decade an important role in attempts to understand generic properties of wave functions for systems at the point of the Anderson localization transition where such multifractality is conjectured to be one of the most characteristic features  \cite{EMrev}.
Let us note that a few other critical random matrix models with multifractal eigenvectors attracted much of attention recently \cite{FOR2009,BG2011,MG2011}.

It was first suggested in \cite{prbm} that in the limit of large control parameter $b\gg 1$ and Gaussian randomness  the random matrix model (\ref{model}) undergoes an Anderson-type transition when changing the exponent parameter $\alpha$. Namely, via a mapping to a non-linear $\sigma$-model it was argued that whereas for any $\alpha>1$ the eigenvectors of the underlying matrices are  localized, they are delocalized for $\alpha<1$, and precisely at the critical value $\alpha=1$ eigenvectors show nontrivial multifractal behavior. To quantify this statement consider a typical eigenvector
$\psi=(\psi_1,...,\psi_N)$ and given a parameter $q>0$ define the inverse participation ratio in terms of the $\ell^{2q}$ norm as

\begin{equation}\label{ipr}
P(q,\psi)=\|\psi\|_{2q}^{2q}=\sum_{i=1}^N|\psi_i|^{2q}.
\end{equation}
where we normalize the $\ell^2$ norm to $1$ i.e. $P(1,\psi)=1$. One expects $P(q)$ to
scale with volume as
\begin{equation}\label{eq:scaling}
P(q,\psi)\propto N^{-d(q)(q-1)}
\end{equation}
with $d(q)=0$  for localized states,  $d(q)=1$ for extended states and
$0<d(q)<1 $ for multifractal states.

We consider in this paper only the critical case $\alpha=1$ in \eqref{model} leading to multifractal eigenfunctions.
Let us also for simplicity take periodic boundary conditions i.e. we let the indices $i\in\Z_N$, the integers modulo $N$, and let $|i-j|$  in  \eqref{model} stand for  the distance on $\Z_N$ and similarly the addition $i+j$.
The LME renormalization group deals with the same model (\ref{model}) in the case of a small control parameter
 $b\to 0$ which is exactly opposite to that considered in \cite{prbm}. Nevertheless, by heuristically deriving and analyzing an approximate renormalization group flow
Mirlin and Evers concluded that  for the critical value $\alpha=1$ a nontrivial multifractality of eigenvectors survives for any small $b>0$, although all $d(q)$ in this limit remain, non-surprisingly, parametrically close to the localized limit  $d(q)=0$.

 Essentially, the LME procedure aims to construct the orthogonal matrix $O$ diagonalizing $H$,
$O^THO={\rm diag}(E_i)$ inductively by finding matrices  $O_n$ s.t.
$$(O_n^THO_n)_{ij}=0,\ \ 1\leq |i-j|\leq n.
$$

Obviously $O_0=1$ so let $n=1$. We have that $E^0_i=H_{ii}$ are i.i.d
with variance 1. Hence typical eigenvalue differences are $E^0_i-E^0_j=\mathcal{O}(1)$. Moreover, the variance of the perturbation $h_{ij}:=H_{ij}1_{|i-j|= 1}$ is $b^2$ so that typically $h_{ij}=\mathcal{O}(b)$. If these conditions were satisfied the matrix  $O_1$ would be  $O_1=1+m$ where $m=\caO(b)$. Indeed, to  first order in the perturbation $h$
\beq
m_{ij}=\frac{h_{ij}}{E^0_j-E^0_i}+\dots\label{onij}.
\eeq
However, when
$E^0_i-E^0_j=\mathcal{O}(b)$ perturbation theory breaks down, and such an event is called a {\it resonance}.  Pick $a>0$ and define the {\it resonance set at scale 1} by
$$
R_1=\{i: \exists j, |i-j|=1,|E^0_i-E^0_j|\leq b^a\ \}.
$$
With high probability, as $b\to 0$ the set $R_1$ consists of isolated nearest neighbor pairs separated by distance $\caO(b^{-a})$. Under such an event the matrix $h|_{R_1}$ is diagonalized by the matrix
\beq\label{odec0}
O_{R_1}:=\oplus_{\{i,i+1\}\subset R_1}O^{i}_1
\eeq
where $O^{i}_1$ diagonalizes the resonant matrix
\begin{equation}\label{reso}
\left(\begin{array}{cc}
E^0_i & H_{ii+1} \\
H_{ii+1}  & E^0_{i+1}
\end{array}\right).
\end{equation}
The LME RG will now be defined by making  the approximation
\beq\label{odec}
O_1=O_{R_1}\oplus 1_{R_1^c}
\eeq
and  approximating $O_1^THO_1$ by ${\rm diag}(E^1_i)+H^{1}$ where $E^1_i=E^0_i$ when $i\in R_0^c$ and $E^1_i,E^1_{i+1}$ are the eigenvalues of \eqref{reso} if $i,i+1\in R_1$ and $H^{1}_{ij}=H_{ij}1_{|i-j|>1}$.

Note that the true $O_1$ has of course  a more complicated structure. In the non resonant region its off-diagonal matrix elements $O_{1,ij}$ will not be exactly zero, but small and exponentially decaying in the separation $|i-j|$. Moreover the resonant and non resonant regions are not completely decoupled. However for small $b$ \eqref{odec} should capture the essence of $O_1$. Also, $O_1^THO_1$ will include matrix elements with
$|i-j|=1$ and further renormalize the diagonal elements $E^1_i$. These corrections are subleading in $b$ and are dropped in the LME approximation.

Inductively, we approximate  $O_{n}^THO_{n}$ by ${\rm diag}(E^{n}_i)+ H^n$ where $H^{n}_{ij}=H_{ij}1_{|i-j|>n}$. We define  the {\it resonance set at scale n+1} by
$$
R_{n+1}=\{i: \exists j, |i-j|=n+1,|E^{n}_i-E^{n}_j|\leq (\frac{_b}{^{n+1}})^a\ \}.
$$
and set
$$
O_{n+1}=O_{n}o_n
$$
where $o_n$ is defined as $O_1$ above by diagonalizing on the resonant set $
R_{n+1}$ the matrix
\begin{equation}\label{reson}
M_{ij}^n:=
\left(\begin{array}{cc}
E^n_i & h_{ij} \\
h_{ij}  & E^n_{j}
\end{array}\right)
\end{equation}
where $h_{ij}=H^n_{ij}1_{|i-j|=n+1}$.
$h_{ij}$ have variance $b^2/(n+1)^2$. This time the resonant set has  density $\caO((\frac{b}{n+1})^{a})$.

If we set  $H^n=0$ the eigenvectors of $H$ are given by $\psi^n_i=O_n\psi^0_i$ where $\psi^0_i$ are the canonical  basis vectors of $\R^N$. In case of a resonant pair $i,j$ write the restriction of $o_n$ to $V_{ij}:={\rm span}\{\psi^0_i, \psi^0_j\}$ as
 \begin{equation}\label{evrec}
 o_n|_{V_{ij}}= \left(\begin{array}{cc}
\cos\theta_n & \sin\theta_n \\
 -\sin\theta_n & \cos\theta_n
\end{array}\right)
\end{equation}
so that 
\begin{align*}
\psi^{n+1}_i&=\cos\theta_n\psi^{n}_i + \sin\theta_n\psi^{n}_j\\
\psi^{n+1}_j&=\cos\theta_n\psi^{n}_j - \sin\theta_n\psi^{n}_i.
\end{align*}

We point out that the random variable $\theta_n$ here depends on the resonant pair $(i,j)$, but for all pairs, the distribution is the same.

 Hence 
we get for the the inverse participation ratios
\beq\nonumber
P(q;\psi_i^{n+1})=\|\cos\theta_n\psi^{n}_i+\sin\theta_n\psi^{n}_j\|_{2q}^{2q}
\eeq
Note that $\psi^{n}_i$ has support in the original basis in a ball of radius $k$ around $i$ where $k$ is the largest scale $k\leq n$ s.t. $\psi^{k}_i$ was resonant. Hence with high probability $\psi^{n}_i$ and $\psi^{n}_j$ have disjoint supports and  on that event
\beq\label{iter}
P(q;\psi_i^{n+1})=(\cos^{2}\theta_n)^qP(q;\psi_i^{n})+(\sin^{2}\theta_n)^qP(q;\psi_j^{n}).
\eeq

\begin{remark}
The terms $(\sin^{2}\theta_n)^q$ and $(\cos^2\theta_n)^q$ will appear repeatedly and for notational brevity, we will write them as $\sin^{2q}\theta_n$ and $\cos^{2q}\theta_n$ (the possible ambiguity in the latter notation being when $\sin\theta_n$ or $\cos\theta_n$ are negative).
\end{remark}

The eigenvalues $E^n_i$ and $E^n_j$ are functions of the original $E^0_k$, $H_{lm}$ with
disjoint support in the indices and hence independent in our approximation. Obviously the law of
$P(q;\psi_i^{n})$ is independent of $i$. We are therefore led to consider the iteration for random variables $P_n(q)$:
\begin{equation}\label{lemrec}
P_{n+1}(q)\stackrel{d}{=}\sin^{2q}\theta_n P_n^{(1)}(q)+\cos^{2q}\theta_n P_n^{(2)}(q),
\end{equation}
 where $P_{n}^{(i)}$ are mutually independent copies of $P_n(q)$, independent  of  $\theta_n$  and $P_1(q)=1$. Note that \eqref{lemrec} differs from \eqref{iter} in that the latter was argued to hold (approximately) provided the resonant set consists of disjoint pairs whereas in the former we extend that relation everywhere. In fact shortly we will argue that this change does not affect the multifractal behavior.

 To understand the recursion \eqref{lemrec}, let us derive the law of $\theta_n$.  First note that $\theta_n$, $\theta_n+\hf\pi$ and $\hf\pi-\theta_n$ lead to the same eigenspaces. Hence we may restrict $\theta_n$ to the interval  $\theta_n\in[-\frac{\pi}{4}, \frac{\pi}{4}]$. Introducing the variable
$$
\tau_n={\delta E^n}/{h_{ij}}$$
where $\delta E^n=\hf(E^n_i-E^n_j)$ the eigenvectors 
 $v=(a,b)^T$ of \eqref{reson} satisfy
$$
(\tau_n\pm\mathrm{sgn}(h_{ij})\sqrt{1+\tau_n^2})a+b=0.
$$
Writing
\begin{equation}\label{tau}
\tau_n=-\cot 2\theta
\end{equation}
some algebra leads to $\frac{a}{b}=-\cot\theta$ or  $\frac{a}{b}=\tan\theta$. Hence we may identify $\theta$ with $\theta_n$.

Let the law of $t:={\delta E^n}$  have density $\rho(t)$ and let $h_{ij}$ have
the law of $\epsilon v$ where $\epsilon=\frac{b}{n}$ and  $v$ has density $\gamma(v)$. We suppose
$\rho$ and $\gamma$  have all positive moments and that $\|\rho'\|_\infty<\infty$. Then  for a bounded continuous function $f$
\beq\label{tanlaw}
\E(f(\theta_n))=\int_\R dt \rho(t)\int_\R dv \gamma(v)f\left(-\hf\cot^{-1}(\frac{_t}{^{\epsilon v}})\right)
\eeq
Changing  variables by 
$v= -\frac{t}{{\epsilon }}\tan 2\theta$  we get
\baq\label{tanlaw1}
\E(f( \theta_n))&=&\frac{2}{\epsilon}\int_{-\frac{\pi}{4}}^{\frac{\pi}{4}}\frac{d\theta}{\cos^22\theta}
\int_\R dt \rho(t)|t| \gamma(-\frac{_{\tan 2\theta} }{^\epsilon}t)f(\theta)\nonumber\\
&=&\epsilon\int_{-\frac{\pi}{4}}^{\frac{\pi}{4}}\frac{d\theta}{\sin^22\theta}
\chi_\epsilon(\theta)f(\theta)
\eaq
where
\beq\label{tanlaw2}
 \chi_\epsilon(\theta)=2\int_\R dt \rho(-\frac{_\epsilon}{^{\tan 2\theta }}t)|t |\gamma(t).
\eeq
Note that $ \chi_\epsilon$ regularizes the integrand  in \eqref{tanlaw1} around the singularity at $\theta=0$. Indeed, 
by change of variables
\beq\label{chi1}
 \chi_\epsilon(\theta)\leq 2(\frac{_{\tan 2\theta} }{^\epsilon})^2 \|\gamma\|_\infty\int_\R dt \rho(t)|t |\leq C(\frac{_{\theta} }{^\epsilon})^2
\eeq
 Moreover, we have

\begin{lemma}\label{sing}Let $f$ be bounded on   $[-\frac{\pi}{4}, \frac{\pi}{4}]$ with $f(\theta)=\caO(|\theta|^\alpha)$ as $\theta\to 0$, with $\alpha>1$. Then
$$
\E(f( \theta_n))=\beta\frac{_b}{^n}\int_{-\frac{\pi}{4}}^{\frac{\pi}{4}}
\frac{d\theta}{\sin^22\theta}
f(\theta)+\caO((\frac{_b}{^n})^{\alpha})
$$
with $\beta=2\rho(0)\int_\R dt |t |\gamma(t)$.
\end{lemma}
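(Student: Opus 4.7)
The plan is to substitute the expression \eqref{tanlaw1}--\eqref{tanlaw2} for $\E(f(\theta_n))$ and extract the main term by replacing $\chi_\epsilon(\theta)$ with its $\epsilon \to 0$ limit $\chi_0 := 2\rho(0)\int |t|\gamma(t)\,dt = \beta$, controlling the remainder quantitatively via two complementary bounds on $\chi_\epsilon - \beta$.

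First, writing
$$\chi_\epsilon(\theta) - \beta = 2\int_\R \bigl[\rho\bigl(-\tfrac{\epsilon t}{\tan 2\theta}\bigr)-\rho(0)\bigr]\,|t|\gamma(t)\,dt,$$
the assumption $\|\rho'\|_\infty<\infty$ and the mean value theorem yield
$$|\chi_\epsilon(\theta)-\beta|\le 2\|\rho'\|_\infty\,\frac{\epsilon}{|\tan 2\theta|}\int t^2 \gamma(t)\,dt \;\le\; C_1 \frac{\epsilon}{|\theta|}$$
for $\theta\in[-\tfrac{\pi}{4},\tfrac{\pi}{4}]$. This bound is sharp away from $\theta=0$ but diverges as $\theta\to 0$. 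There the complementary bound \eqref{chi1} applies: $\chi_\epsilon(\theta)\le C_2(\theta/\epsilon)^2$, giving $|\chi_\epsilon(\theta)-\beta|\le \beta+C_2(\theta/\epsilon)^2$, which is uniformly bounded on the regime $|\theta|\le \epsilon$.

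Substituting into \eqref{tanlaw1} gives the desired main term plus the remainder
$$R_\epsilon := \epsilon\int_{-\pi/4}^{\pi/4}\frac{\chi_\epsilon(\theta)-\beta}{\sin^2 2\theta}f(\theta)\,d\theta.$$
Using $|f(\theta)|\le C|\theta|^\alpha$ and $\sin 2\theta \asymp \theta$ near $0$, I split $R_\epsilon$ at $|\theta|=\epsilon$. On $|\theta|\le\epsilon$ the uniform bound on $\chi_\epsilon - \beta$ and $\alpha>1$ give
$$\epsilon \int_0^{\epsilon} \theta^{\alpha-2}\,d\theta = O(\epsilon^{\alpha}).$$
On $\epsilon<|\theta|\le \pi/4$ the bound $|\chi_\epsilon-\beta|\le C_1\epsilon/|\theta|$ gives
$$\epsilon^2 \int_{\epsilon}^{\pi/4}\theta^{\alpha-3}\,d\theta,$$
which is $O(\epsilon^\alpha)$ in the interesting range $1<\alpha<2$ (and at worst $O(\epsilon^2 |\log\epsilon|)$ at $\alpha=2$, still absorbed in the stated error).

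The main obstacle is the singularity at $\theta=0$: the factor $1/\sin^2 2\theta$ is not integrable, and only the cancellations built into $\chi_\epsilon$ (its vanishing like $\theta^2/\epsilon^2$ near the origin) render the integral finite. Matching the two regimes at the correct scale $|\theta|\sim \epsilon$, so that neither piece dominates nor produces a spurious large contribution, is the delicate point; once the correct split is made, the rest is straightforward change-of-variables and use of the finite moments of $\rho$ and $\gamma$.
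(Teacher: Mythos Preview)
Your proposal is correct and follows essentially the same route as the paper: split the integral at the scale $|\theta|\sim\epsilon$, use the pointwise bound \eqref{chi1} (giving a uniform bound on $\chi_\epsilon$) on the inner region and the Lipschitz bound \eqref{chi2} on the outer region. One small inaccuracy: $\epsilon^2|\log\epsilon|$ is not literally $\caO(\epsilon^2)$, so at $\alpha=2$ the remainder is not ``absorbed in the stated error'' as you claim---the paper's proof glosses over the same point, and the slightly weaker bound is harmless for all applications downstream.
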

\begin{proof} We have by \eqref{chi1}
$$
\epsilon\int\frac{d\theta}{\sin^22\theta}
\chi_\epsilon(\theta)f(\theta)1_{|\tan 2\theta|\leq\epsilon}\leq C\epsilon^{-1}\int_{-\epsilon}^\epsilon|\theta|^\alpha\leq C\epsilon^\alpha.
$$
Moreover, by our assumption on $f$,
$$
\epsilon \int\frac{d\theta}{\sin^2 2\theta}f(\theta)1_{|\tan 2\theta|\leq \epsilon}\leq C\epsilon^\alpha.
$$
From \eqref{tanlaw2} we have
\beq\label{chi2}
| \chi_\epsilon(\theta)-\beta|\leq C\frac{_\epsilon}{^{|\tan 2\theta|} }\|\rho'\|_\infty\leq C\frac{_\epsilon}{^{|\theta|} }
\eeq
so that
$$
\epsilon\int\frac{d\theta}{\sin^22\theta}
|\chi_\epsilon(\theta)-\beta|f(\theta)1_{|\tan 2\theta|\geq\epsilon}\leq C\epsilon^\alpha.
$$
Putting things together gives the claim.

\end{proof}
\begin{remark}
The parameter $\beta$ depends on $\rho(0)$ i.e. the density of the eigenvalue differences $\delta E_n$ at $\delta E_n=0$ and on $\E|v|$. Note that we have (in our approximation)
$$
|\delta E_{n+1}|=\sqrt{\delta E_n^2+h_n^2}
$$
where $h_n=h_{ij}$ with $|i-j|=n$. Therefore $\beta_n$ tends to a limit $\beta$ as $n\to\infty$.
The multifractal exponents turn out to be proportional to $\beta$. Since it plays no further role in the analysis we will set it to one in what follows. This will change some of our results from those in \cite{em} by a factor of $8/\pi$.
\end{remark}
Motivated by these considerations we can now define the recursion which we will study in this paper.

\begin{definition}
We say that the law of $P_n(q)$ satisfies the LME recursion if
\begin{equation}\label{LME}
P_{n+1}(q)\stackrel{d}{=}\sin^{2q}\theta_n P_n^{(1)}(q)+\cos^{2q}\theta_n P_n^{(2)}(q),
\end{equation}
where $P_n^{(1)}$ and $P_n^{(2)}$ are independent copies of $P_n(q)$, $P_1(q)=1$ and $\theta_n$ is independent of $P_n^{(i)}(q)$ and distributed on $[-\frac{\pi}{4}, \frac{\pi}{4}]$ with density
\begin{equation}
r_n(\theta)=\frac{_b}{^n}\chi_{\frac{_b}{^n}}(\theta){(\sin2\theta)}^{-2}
\end{equation}
and $\chi$ is any non-negative function s.t. $\int r_n=1$ and  \eqref{chi1} and \eqref{chi2}  hold. We also recall that by $\sin^{2q}\theta_n$ and $\cos^{2q}\theta_n$ we mean $(\sin^2\theta_n)^q$ and $(\cos^2 \theta_n)^q$.

\end{definition}

\begin{remark}
Note that as $n\to\infty$, the law of $\theta_n$ degenerates to $\theta=0$ a.s. 
\end{remark}

\begin{remark}If we write $f_n$ for the density of the law of $P_n$, then the LME recursion in terms of $f_n$ is
\begin{align*}
f_{n+1}(p)=\int_{-\frac{\pi}{4}}^{\frac{\pi}{4}}
d\theta r_n(\theta)\int_0^\infty\int_0^\infty dp_1dp_2f_n(p_1)f_n(p_2)\delta(p-p_1\sin^{2q}\theta-p_2\cos^{2q}\theta),
\end{align*}
which is essentially a discrete version of equation (43) in \cite{em}.
Similarly we can express the recursion in terms of the Laplace transform of the law of $P_n$. Let us write $\psi_n(t)=\E(e^{-tP_n})$. Plugging the recursion into this gives
\begin{equation}
\psi_{n+1}(t)=\int_{-\frac{\pi}{4}}^{\frac{\pi}{4}}
r_n(\theta)\psi_n(t\sin^{2q}\theta)\psi_n(t\cos^{2q}\theta)d\theta.
\end{equation}
\end{remark}

\begin{remark}
Recall that we defined the resonant pairs by $ |E^n_i-E^n_j|\leq (b/n)^a$. Since $h_{ij}=\caO(1/n)$ this
implies that for non-resonant pairs $|\theta_n|=\caO(n^{a-1}) $ and therefore provided $2q>1$ which we will assume throughout the paper we may arrange $2q(1-a)>1$ and $ 2(1-a)>1$. This means that the contribution of the non-resonant pairs in the iteration \eqref{lemrec} is $\caO(n^{-1-a'}P_n)$ with $a'>0$ which will not contribute to the multifractal exponent.
\end{remark}

\begin{remark}\label{imbrie}
 Our approximations in deriving the LME RG were dropping corrections to the renormalized eigenvalues $E^{n+1}_i$ and the matrix $o_n$ in the non resonant region and ignoring multiple resonances. The perturbation theory in the non resonant region  produces off-diagonal terms to $o_n$ that are exponentially decaying (at the scale $n$) and down by inverse powers in $n$. Similarly, $E^{n+1}_i$ receive corrections that are non-local in the $E^n_k$ and $h$. Thus, even if no resonances were present the eigenvalues   $E^n_i$ are not independent but weakly correlated.  In the presence of resonances perturbation theory in the non resonant region does not decouple from the resonant one and there are further corrections that should be small due to the small probability of resonances. However to make a rigorous proof is very challenging. It would be very interesting to try to adopt the iterative scheme of \cite{Imbrie} to this problem.
\end{remark}
\section{Some background in recursions of branching random walks}

Let us consider an extremely simple example of a branching recursion related to a branching random walk. Let $V_1$ and $V_2$ be i.i.d. standard normal random variables  (compared to \eqref{lemrec}, these play the role of $\log \sin^2\theta_n$ and $\log\cos^2\theta_n$), $Z_0(\beta)=1$ ($Z$ corresponding to $P$), $\beta\in \R$ (corresponding to $q$) and define

\begin{equation}\label{cascade}
Z_{n+1}(\beta)\stackrel{d}{=}e^{\beta V_1}Z_n^{(1)}(\beta)+e^{\beta V_2}Z_n^{(2)}(\beta),
\end{equation}

\noindent where $Z_n^{(i)}$ are independent copies of $Z_n$ and they are independent from the $V_i$ as well. Compared to \eqref{lemrec}, we point out immediately some differences. In \eqref{lemrec}, the terms corresponding to $(e^{\beta V_1},e^{\beta V_2})$ are not independent, moreover their law depends on $n$ and as $n\to \infty$ and their joint law converges to a degenrate distribution (one of them is almost surely one and the other almost surely zero).

\vspace{0.3cm}

The asymptotic behavior of $Z_n$ and related objects has been studied under various generalizations of this recursion, see e.g. \cite{big, kah,kp,bram, bk, aid, as, ds, mad,cw,rl,dl,rderev}. Gaussianity is not important, one might have a random amount of terms in the recursion and the terms might even be correlated. Though much of the behavior in such recursions is universal \eqref{cascade} is perhaps the simplest recursion to study these universal features in. Moreover, even situations where the law of the terms corresponding to $V_i$ depends on $n$ has been studied in some relatively simple cases, but it seems that the situation where the limiting law of these has such a degenerate structure has not been studied rigorously.

\vspace{0.3cm}

Before briefly discussing the type of methods used to study recursions of this type, let us state the theorem describing the asymptotic behavior of $Z_n(\beta)$.

\begin{theorem}[\cite{kah,kp,big,bram,as,mad,cw,brv}]\label{th:cascconv}
For $\beta<\beta_c=\sqrt{2\log 2}$, $M_n(\beta)=\frac{Z_n(\beta)}{\E(Z_n(\beta))}=2^{-n} e^{-n\frac{\beta^2}{2}}Z_{n}(\beta)$ converges in law to an almost surely positive random variable $M(\beta)$, which satisfies the distributional equation

\begin{equation}
M(\beta)\stackrel{d}{=}\frac{1}{2}e^{\beta V_1-\frac{\beta^2}{2}}M^{(1)}(\beta)+\frac{1}{2} e^{\beta V_2-\frac{\beta^2}{2}}M^{(2)}(\beta),
\end{equation}

\noindent where $M^{(i)}(\beta)$ are independent copies of $M(\beta)$ and are independent of $V_j$.

\vspace{0.3cm}

For $\beta\geq \beta_c$, $M_n(\beta)$ converges in law to zero, but $\sqrt{n}M_n(\beta_c)$ converges in law to an almost surely positive random variable $M'(\beta_c)$ which satisfies

\begin{equation}
M'(\beta)\stackrel{d}{=}\frac{1}{2}e^{\beta_c V_1-\frac{\beta_c^2}{2}}(M'(\beta_c))^{(1)}+\frac{1}{2} e^{\beta_c V_2-\frac{\beta_c^2}{2}}(M'(\beta))^{(2)},
\end{equation}

\noindent with similar independence structure.

\vspace{0.3cm}

Finally, for $\beta>\beta_c$, $n^{\frac{3\beta}{2\beta_c}}e^{\frac{n}{2}(\beta-\beta_c)^2}M_n(\beta)$ converges in law to a non-trivial random variable $\widetilde{M}(\beta)$ satisfying

\begin{equation}
\widetilde{M}(\beta)\stackrel{d}{=}\frac{1}{2}e^{\beta V_1-\beta\beta_c+\frac{\beta_c^2}{2}}\widetilde{M}^{(1)}(\beta)+\frac{1}{2}e^{\beta V_2-\beta\beta_c+\frac{\beta_c^2}{2}}\widetilde{M}^{(2)}(\beta).
\end{equation}

Moreover, the law of $\widetilde{M}(\beta)$ can be written as

\begin{equation}
\widetilde{M}(\beta)\stackrel{d}{=}L_{\frac{\beta_c}{\beta}}(M'(\beta_c)),
\end{equation}

\noindent where the process $t\mapsto L_\alpha(t)$, $t\geq 0$ is a stable L\'evy subordinator of index $\alpha$, independent of $M'(\beta_c)$.
\end{theorem}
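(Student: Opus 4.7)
The proof splits naturally along the three regimes, each calling for a different tool from the theory of branching random walks and multiplicative cascades.

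In the subcritical phase $\beta<\beta_c$, the random variables $M_n(\beta)=Z_n(\beta)/\E Z_n(\beta)$ form a nonnegative martingale with respect to the natural filtration of the binary branching structure, so by Doob's theorem $M_n(\beta)\to M(\beta)$ almost surely. To upgrade this to $L^1$-convergence (so that $M(\beta)>0$ a.s. and the stated fixed-point equation holds in the limit) I would invoke the Kahane--Peyri\`ere/Biggins criterion, which characterizes non-degeneracy via $\beta\Psi'(\beta)<\Psi(\beta)$ with $\Psi(\beta)=\log 2+\beta^2/2$; a direct computation shows that this inequality holds precisely when $\beta<\beta_c$. Passing to the limit in the one-step recursion \eqref{cascade} then gives the distributional equation for $M(\beta)$, using independence of $V_1,V_2$ from the subtrees.

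At the critical value $\beta=\beta_c$ the martingale $M_n(\beta_c)$ is no longer uniformly integrable and converges to $0$ almost surely. The correct normalization is identified through the \emph{derivative martingale} $D_n=-\partial_\beta M_n(\beta)|_{\beta=\beta_c}$; after a Biggins--Kyprianou truncation this converges a.s.\ to a strictly positive random variable $D$. I would then invoke the Seneta--Heyde-type theorem of A\"id\'ekon--Shi (via a spine change of measure and estimates for random walks conditioned to stay positive) to conclude that $\sqrt{n}\,M_n(\beta_c)$ converges in law to a positive multiple of $D$, which we call $M'(\beta_c)$. The fixed-point equation at criticality follows from the recursion once uniform integrability of the rescaled object is established on compact subsets of the tree.

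For the supercritical phase $\beta>\beta_c$ the sum $Z_n(\beta)$ is dominated by a thin layer of extremal particles near the front of the branching random walk. I would appeal to the Bramson--A\"id\'ekon--Madaule description of the extremal point process of a branching random walk as a decorated Poisson point process with (random) intensity proportional to $e^{-\beta_c x}\,D\,dx$. Summing $e^{\beta V_u}$ over this point process, the tail index $\alpha=\beta_c/\beta<1$ places us in the domain of attraction of a one-sided stable law, and a Lévy--Khintchine computation conditional on $D$ identifies the limit with $L_{\beta_c/\beta}(M'(\beta_c))$, yielding both the normalization $n^{3\beta/(2\beta_c)}e^{n(\beta-\beta_c)^2/2}$ and the subordinator representation. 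The main obstacle is the joint convergence of the extremal process together with the derivative martingale, and the control of the ``decoration'' clusters needed to show that the contribution of non-leading particles is negligible at the claimed scale; this is the core content of Madaule's work and must be invoked as a black box.
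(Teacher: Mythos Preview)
The paper does not give its own proof of this theorem: it is stated in Section~3 as background, with attributions to the references in the theorem label, and the surrounding text only sketches the ingredients (the tree realization, the martingale $M_n(\beta)$, the derivative martingale $-M_n'(\beta_c)$, and the role of the maximum $\max_\sigma X_\sigma$ for $\beta>\beta_c$) without carrying out any argument. So there is no ``paper's own proof'' to compare against.

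That said, your outline is a faithful summary of the standard route through the cited literature, and it lines up with the informal discussion the paper does provide: Kahane--Peyri\`ere/Biggins for the subcritical martingale limit, the derivative martingale and A\"id\'ekon--Shi Seneta--Heyde scaling at $\beta_c$, and the extremal-process description (Bramson, A\"id\'ekon, Madaule) feeding into a stable-law computation for $\beta>\beta_c$. Your identification of the non-degeneracy condition $\beta\Psi'(\beta)<\Psi(\beta)$ with $\Psi(\beta)=\log 2+\beta^2/2$ is exactly the computation that singles out $\beta_c=\sqrt{2\log 2}$, and your acknowledgement that the supercritical step must lean on Madaule's joint convergence as a black box is honest and appropriate. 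In short, your proposal is correct as a high-level plan and matches the spirit of the paper's exposition; there is simply nothing more detailed in the paper to compare it to.
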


\begin{remark} As we shall soon see, $Z_{n,\beta}$ can be viewed as the partition function of a type of Random Energy Model with specially (logarithmically) correlated energies and the fact that the law of $\widetilde{M}(\beta)$ is essentially characterized by $M'(\beta_c)$ is one aspect of the freezing transition occurring in the study of $\log$-correlated disordered systems, see  \cite{cld,FyoRev2010,fg} for discussions of various physical aspects of freezing and its relation to multifractality, and further references.
\end{remark}

\begin{remark} For this theorem, the relevant property of the stable subordinator is that it satisfies $\E(e^{-sL_\alpha(t)})=e^{-Cts^\alpha}$ for some positive constant $C$.
\end{remark}

In addition to proving convergence, the following properties of the limiting random variables are known:

\begin{theorem}[\cite{tailsubc,tailc,dl,molch}]\label{th:cascmom}
For $\beta<\beta_c$, $\E(M(\beta)^p)<\infty$ if and only if $p<\frac{\beta_c^2}{\beta^2}$ and $\E(M'(\beta_c)^p)<\infty$ if and only if $p<1$. In fact, one has the sharper result that there exists a positive constant $c(\beta)$ so that for $\beta<\beta_c$

\begin{equation}\label{momtailhightemp}
\lim_{x\to\infty}x^{\frac{\beta_c^2}{\beta^2}}\mathbb{P}(M(\beta)\geq x)=c(\beta)
\end{equation}

\noindent and

\begin{equation}\label{momtaillowtemp}
\lim_{x\to\infty}x\mathbb{P}(M'(\beta_c)\geq x)=c(\beta_c).
\end{equation}

\end{theorem}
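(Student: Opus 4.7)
The plan is to treat the moment finiteness and the sharp tail asymptotics separately. For the moments I would follow the elementary route via the distributional equation and Jensen-type inequalities; for the subcritical tail I would invoke the implicit renewal theorem of Goldie in its smoothing-transform version (due in various incarnations to Guivarc'h, Liu, Iksanov--Meiners, Buraczewski--Damek--Mikosch); and for the critical tail I would use the spine/derivative-martingale machinery of Aidekon--Shi and Madaule.

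For the moment bounds in the subcritical regime, set $A_i=\tfrac12 e^{\beta V_i-\beta^2/2}$ so that $\E(A_i^p)=2^{-p}e^{\beta^2 p(p-1)/2}$, and note that $2\E(A_1^\alpha)=1$ precisely when $\alpha=\beta_c^2/\beta^2$. Using the fixed-point identity $M(\beta)\stackrel{d}{=}A_1 M^{(1)}+A_2 M^{(2)}$ together with independence, for $p>1$ the inequality $(a+b)^p\geq a^p+b^p$ yields $\E(M(\beta)^p)\geq 2\E(A_1^p)\E(M(\beta)^p)$, so finiteness forces $p\leq\alpha$. Conversely, for $1<p<\alpha$ one has $2\E(A_1^p)<1$, and applying $(a+b)^p\leq 2^{p-1}(a^p+b^p)$ to the pre-limit recursion \eqref{cascade} together with an inductive bootstrap gives $\sup_n \E(M_n(\beta)^p)<\infty$; uniform integrability combined with the convergence in Theorem \ref{th:cascconv} then yields $\E(M(\beta)^p)<\infty$. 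The same scheme applied to the critical fixed-point equation for $M'(\beta_c)$, where $2\E(A_1)=1$, produces $p<1$ as the moment threshold.

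For the sharp subcritical tail $\Prob(M(\beta)\geq x)\sim c(\beta)x^{-\alpha}$, I would size-bias the weights by $A_i^\alpha/\E(A_i^\alpha)$: under the tilted law the cascade becomes a branching random walk whose spine is an ordinary random walk with increments $\log(2A_1)$ of strictly positive mean (since $\alpha$ lies strictly below the critical exponent of the convex function $p\mapsto 2\E(A_1^p)$). Writing $U(t)=e^{\alpha t}\Prob(\log M(\beta)>t)$, iterating the distributional equation once and applying the tilt produces an implicit renewal equation for $U$ whose kernel is non-lattice (immediate because $V$ is Gaussian); the Kesten--Grincevicius--Goldie theorem then gives $U(t)\to c(\beta)$ under the integrability condition $\E(M(\beta)^\alpha \log_+ M(\beta))<\infty$, itself obtained by sharpening the previous moment argument at the critical exponent. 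Positivity of $c(\beta)$ is inherited from the non-triviality of $M(\beta)$ in Theorem \ref{th:cascconv}.

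The critical tail $\Prob(M'(\beta_c)>x)\sim c(\beta_c)/x$ is the genuine obstacle, since at $\alpha=1$ the tilted spine walk has zero mean and Goldie's theorem degenerates. Here I would use the Lyons--Pemantle--Peres spine decomposition to represent the size-biased derivative martingale as a centred Gaussian random walk conditioned to stay positive, and then follow Madaule's strategy: the tail of $M'(\beta_c)$ is controlled by the renewal measure for the ladder heights of this walk, which under the finite second moment available at $\beta_c$ produces the $1/x$ decay, with constant $c(\beta_c)$ non-degenerate thanks to the non-triviality of $M'(\beta_c)$. The delicate part is justifying the spine identity together with sharp second-moment estimates in the spirit of Aidekon--Shi and verifying that the error terms in the associated implicit equation are summable; this is where the bulk of the technical work lies, while the moment bounds and the subcritical Goldie argument are essentially routine.
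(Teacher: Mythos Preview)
The paper does not give its own proof of this theorem: it is stated as a background result with citations to \cite{tailsubc,tailc,dl,molch}, so there is nothing in the paper to compare your argument against. Your outline is broadly aligned with those references (Guivarc'h/Goldie renewal for the subcritical tail, Buraczewski for the boundary case, Durrett--Liggett/Kahane--Peyri\`ere style arguments for the moments).

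That said, there is a concrete gap in your moment upper bound. The inequality $(a+b)^p\leq 2^{p-1}(a^p+b^p)$ applied to $M_{n+1}=A_1M_n^{(1)}+A_2M_n^{(2)}$ only gives
\[
\E(M_{n+1}^p)\leq 2^{p}\,\E(A_1^p)\,\E(M_n^p)=e^{\beta^2 p(p-1)/2}\,\E(M_n^p),
\]
and the prefactor exceeds $1$ for every $p>1$, so no iteration or bootstrap on top of this inequality can produce $\sup_n\E(M_n^p)<\infty$. You need the sharper decomposition that retains the cross term: for $1<p\leq 2$ write $p=2\cdot\tfrac{p}{2}$, expand the square, and use subadditivity of $x\mapsto x^{p/2}$ to get
\[
\E(M_{n+1}^p)\leq 2\,\E(A_1^p)\,\E(M_n^p)+2\,\E\big((A_1A_2)^{p/2}\big)\,\E(M_n^{p/2})^2,
\]
which is exactly the mechanism the paper itself uses in Section~4 for the LME moments. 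This yields the correct contraction condition $2\E(A_1^p)<1$, i.e.\ $p<\beta_c^2/\beta^2$, and the induction to higher $p$ proceeds as in that section. Your lower bound via $(a+b)^p\geq a^p+b^p$ is fine but only gives $p\leq\beta_c^2/\beta^2$; excluding the endpoint requires either the tail asymptotics or a separate argument.
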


\begin{remark} For $\beta>\beta_c$, the tail asymptotics $\mathbb{P}(\widetilde{M}(\beta)\geq x)\sim x^{-\frac{\beta_c}{\beta}}$ follow from the representation $\widetilde{M}(\beta)\stackrel{d}{=}L_{\frac{\beta_c}{\beta}}(M'(\beta_c))$ and the corresponding asymptotics of $L_{\frac{\beta_c}{\beta}}(1)$ and $M'(\beta_c)$.
\end{remark}

\begin{figure}
\centering
\includegraphics[trim=6cm 17cm 2cm 3cm, scale=0.7]{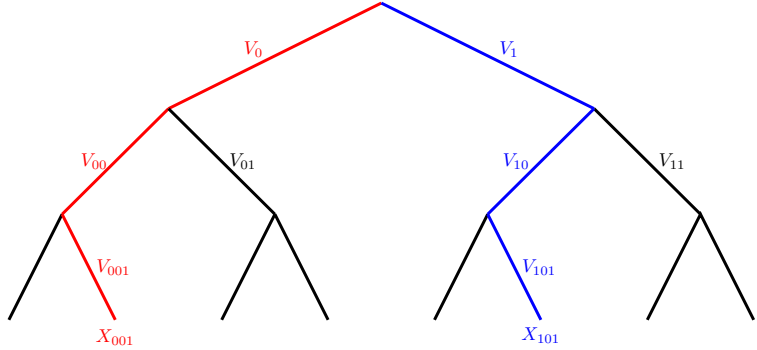}
\caption{A graphical representation of the binary tree and the related random variables used to construct a branching random walk. Here $X_{001}=V_0+V_{00}+V_{001}$ and $X_{101}=V_1+V_{10}+V_{101}$.}
\end{figure}

\vspace{0.3cm}

Let us now turn to a few words on the tools typically used for studying the recursion. Perhaps the most efficient (and most easily generalized) method of studying $Z_n(\beta)$ is through a branching random walk representation. Consider a binary tree and on each edge, place i.i.d. copies of standard normal variables. Each edge at hight $n$ in the tree can be labelled by a binary sequence $\sigma=(\sigma_1,...,\sigma_n)\in\lbrace 0,1\rbrace^n$, where $0$ refers to taking the left path and $1$ to taking the right one at a branching point. Let us denote by $V_\sigma$ the variable associated to the edge $\sigma$ and if $\sigma\in\lbrace 0,1\rbrace^n$ and $k\leq n$, write $\sigma|k$ for the truncation of $\sigma$ to length $k$: $\sigma|k=(\sigma_1,...,\sigma_k)$.

\vspace{0.3cm}

Let us truncate the tree at height $n$. Each branch in the truncated tree can again be identified with a binary sequence $\sigma=(\sigma_1,...,\sigma_n)\in\lbrace 0,1\rbrace^n$. Call $X_\sigma$ the sum of the $V_i$ along the branch: $X_{\sigma}=\sum_{k=1}^n V_{\sigma|k}$.  The collection $(X_\sigma)$ can be viewed as a branching random walk. See Figure 1 for an illustration of the construction. Now let us define

\begin{equation}
Z_n(\beta)=\sum_{\sigma\in\lbrace 0,1\rbrace^n}e^{\beta X_\sigma}.
\end{equation}

\begin{remark} Note that if we think of $-X_\sigma$ as the (random) energy of a spin-configuration $\sigma$, $Z_n(\beta)$ becomes the partition function of a kind of Random Energy Model with correlations. In fact, one can check that the covariance of $X_\sigma$ and $X_{\sigma'}$ is the logarithm of the distance of $\sigma$ and $\sigma'$ in the dyadic metric of the tree. Partition functions of Random Energy models with logarithmic correlations in Euclidean metric have attracted recently considerable attention due to several interesting applications in Statistical Mechanics, Random Matrix Theory and Number Theory see e.g. \cite{FB08,FLDR2009,FLDR2012,FyoKeat2014,ABH} and references therein. They share many characteristic properties with their diadic tree counterparts, such e.g. as the tail behaviour of moments (\ref{momtailhightemp}). Analysis of such models frequently proceeds by identifying an approximate binary tree structure, see e.g. \cite{ABH}.

\end{remark}

For $\sigma\in\lbrace 0,1\rbrace^k$, we also define a similar object corresponding to the tree of height $n$ rooted at $\sigma$:

\begin{equation}
Z_n^{(\sigma)}(\beta)=\sum_{\stackrel{\sigma'\in \lbrace 0,1\rbrace^{n+k}:}{\sigma'|k=\sigma}}e^{\beta (X_{\sigma'}-X_{\sigma})}.
\end{equation}

Note that $Z_n^{(\sigma)}(\beta)\stackrel{d}{=}Z_n(\beta)$ and for $\sigma,\sigma'\in \lbrace 0,1\rbrace^k$ and $\sigma\neq \sigma'$, $Z_n^{(\sigma)}(\beta)$ is independent from $Z_n^{(\sigma')}(\beta)$ and $X_{\sigma}$ is independent from $Z_n^{(\sigma')}(\beta)$ for any $\sigma,\sigma'\in\lbrace 0,1\rbrace^k$. Also one has (simply plugging in the definitions)

\begin{equation}\label{eq:casc2}
Z_{n+k}(\beta)=\sum_{\sigma\in \lbrace 0,1\rbrace^k}e^{\beta X_\sigma}Z_n^{(\sigma)}(\beta).
\end{equation}

In particular, specializing to $k=1$,

\begin{equation}
Z_{n+1}(\beta)=e^{\beta V_0}Z_n^{(0)}(\beta)+e^{\beta V_1}Z_n^{(1)}(\beta),
\end{equation}

\noindent so we have realized the variables appearing in our recursion on the same probability space. Moreover, if we denote by $\mathcal{F}_n$ the $\sigma$-algebra generated by the $V_\sigma$ with $\sigma\in\lbrace 0,1\rbrace^k$ for any $k\leq n$, then it follows directly from \eqref{eq:casc2} (by setting $n=1$) that $M_n(\beta)=\frac{Z_n(\beta)}{\E(Z_n(\beta))}$ is a positive martingale and one sees immediately that it converges to some limit. Then analyzing moments of $M_n(\beta)$ through the recursion, one finds that the limit is non-trivial for $\beta<\beta_c$ and zero for $\beta\geq \beta_c$. Our analysis of the moments of $P_n(q)$ will be very similar to the way one can analyze the moments here so we shall not discuss this further now.

We do however point out that while in the model we have described, the limiting object $M(\beta)$ has moments only up to order $p<\beta_c^2/\beta^2$, Barral and Mandelbrot \cite{bm} considered a slightly different kind of model where $e^{\beta V_i}$ were replaced by variables that could take negative values as well, and in this case one has a limiting random variable which has moments of all positive integer order. Moreover, they used the values of the limiting moments to identify the law of the limiting random variable. This is similar to our argument for small $q$.

\vspace{0.3cm}

Finding the correct normalization for $\beta\geq \beta_c$ is far more complicated. At $\beta_c$ what turns out to be critical is proving that the so-called derivative martingale (see \cite{bk})

\begin{equation}
-M_n'(\beta_c)=2^{-n}\sum_{\sigma\in \lbrace 0,1\rbrace^n}(n\beta_c-X_\sigma)e^{\beta_c X_\sigma-\frac{\beta_c^2}{2}n}
\end{equation}

\noindent converges to an almost surely positive random variable. In fact, the limit equals (in distribution) $M'(\beta_c)$ up to a constant multiple.

\vspace{0.3cm}

For $\beta>\beta_c$, what turns out to be the important question is the behavior of $\max_{\sigma\in \lbrace 0,1\rbrace^n}X_\sigma$. What is typical of the freezing transition is that for large $\beta$, only the variables $X_\sigma$ which are close to the maximum one (on that level in the tree) contribute to $Z_n(\beta)$. The result for the maximum is the following:

\begin{theorem}[\cite{aid,cw}]\label{th:cascmax}
There exists a positive constant $c>0$ such that

\begin{align}\label{extreme}
\lim_{n\to\infty}\mathbb{P}&\left(\max_{\sigma\in\lbrace 0,1\rbrace^n}X_\sigma-\sqrt{2\log 2}n+\frac{3}{2\sqrt{2\log 2}}\log n\leq x\right)\\ \nonumber
&=\E(\exp(-c e^{-\beta_c x}M'(\beta_c))).
\end{align}

\end{theorem}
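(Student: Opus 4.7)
The plan is to adapt the classical program of Bramson, Lalley--Sellke, Biggins--Kyprianou, and Aïdékon for the maximum of a branching random walk to the present Gaussian binary case. Writing $\beta_c = \sqrt{2\log 2}$ and $m_n := \beta_c n - \frac{3}{2\beta_c}\log n$, the argument rests on three ingredients: (i) tightness of $\mathcal{X}_n := \max_{|\sigma|=n}X_\sigma - m_n$; (ii) almost sure convergence of the derivative martingale $-M'_n(\beta_c)$ to a strictly positive limit $M'(\beta_c)$ (already used in Theorem~\ref{th:cascconv}); and (iii) identification of any subsequential limit law of $\mathcal{X}_n$ through the branching recursion.

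For (i) the key tool is the many-to-one identity $\E\bigl[\sum_{|\sigma|=n} F(X_{\sigma|1},\ldots,X_{\sigma|n})\bigr] = \E\bigl[e^{\beta_c S_n - \beta_c^2 n/2}F(S_1,\ldots,S_n)\bigr]$ after the Girsanov-type tilt by the additive martingale, where $(S_k)$ is a standard Gaussian random walk. A direct computation gives $\E[\#\{|\sigma|=n:X_\sigma \geq m_n+x\}] = O(n e^{-\beta_c x})$, and restricting to particles whose ancestral trajectory $k\mapsto X_{\sigma|k}$ stays below the line $k\mapsto m_k + x$ adds a ballot factor of order $1/n$, giving $O(e^{-\beta_c x})$ and hence the upper tail estimate by Markov. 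A matching truncated second moment argument (still restricted below the barrier) yields $\Prob(\mathcal{X}_n \geq -y) \geq c(y)$ with $c(y) \to 1$ as $y \to \infty$. For (ii), one introduces the truncated derivative martingale
\begin{equation*}
D_n^{(K)} := \sum_{|\sigma|=n}(\beta_c n - X_\sigma + K)\,\mathbf{1}_{\{\max_{k\leq n}(X_{\sigma|k}-\beta_c k) \leq K\}}\, e^{\beta_c X_\sigma - \beta_c^2 n},
\end{equation*}
which is a nonnegative supermartingale; monotone convergence as $K\to\infty$ together with an exit-probability bound for the associated random walk yields almost sure convergence of $-M_n'(\beta_c)$ to a limit $M'(\beta_c)\geq 0$, and positivity is obtained from the Biggins--Kyprianou spinal decomposition using recurrence of the walk conditioned to stay positive.

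For step (iii), set $G_n(x) := \Prob(\mathcal{X}_n \leq x)$. The branching structure at level $k$ gives
\begin{equation*}
G_{n+k}(x) = \E\!\left[\prod_{|\sigma|=k} G_n\!\bigl(x + m_{n+k}-m_n - X_\sigma\bigr)\right].
\end{equation*}
Since $m_{n+k}-m_n \to \beta_c k$ as $n\to\infty$ with $k$ fixed, tightness from (i) together with dominated convergence lets us pass to any subsequential limit $G_\infty$:
\begin{equation*}
G_\infty(x) = \E\!\left[\prod_{|\sigma|=k} G_\infty\!\bigl(x + \beta_c k - X_\sigma\bigr)\right], \qquad k \geq 1.
\end{equation*}
Substituting the Lalley--Sellke ansatz $G_\infty(x) = \E\bigl[\exp(-c e^{-\beta_c x} M'(\beta_c))\bigr]$ and using independence of the subtree copies $M'^{(\sigma)}(\beta_c)$ conditional on $\mathcal{F}_k$, the product inside becomes $\E[\exp(-c e^{-\beta_c x}\sum_{|\sigma|=k} e^{\beta_c X_\sigma - \beta_c^2 k} M'^{(\sigma)}(\beta_c))\mid \mathcal{F}_k]$; the consistency then reduces to the branching identity $M'(\beta_c) \stackrel{d}{=} \sum_{|\sigma|=k} e^{\beta_c X_\sigma - \beta_c^2 k} M'^{(\sigma)}(\beta_c)$, which follows from (ii) and the fact that the critical additive martingale $W_n$ tends to $0$ almost surely (so the "extra" terms produced by the branching of the derivative martingale vanish in the limit). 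Uniqueness of the fixed point in the class of Laplace transforms (Liu, Biggins--Kyprianou) upgrades subsequential convergence to full convergence, and the constant $c$ is pinned down by matching the right tail $\Prob(\mathcal{X}_n > x) \sim c\, x\, e^{-\beta_c x}$ as $x \to \infty$.

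The main technical obstacle is the strict positivity of $M'(\beta_c)$ in step (ii): since $(\beta_c n - X_\sigma)$ is a signed quantity, one must rule out the possibility of asymptotic cancellation in the sum. This requires sharp ballot / Kozlov-type estimates for the Gaussian random walk conditioned to stay positive, combined with a size-biased spine construction that identifies the limit of $-M_n'(\beta_c)$ on the event of nondegeneracy. Once these ingredients are in place, the passage from the functional equation to the explicit form of $G_\infty$ in step (iii) is essentially algebraic, and the matching of the constant $c$ follows from the sharp one-point estimates established along the way.
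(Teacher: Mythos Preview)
The paper does not prove Theorem~\ref{th:cascmax}; it is quoted as a known result from \cite{aid,cw} in the background Section~3 and no argument is given. So there is no ``paper's own proof'' to compare against. Your outline is broadly the Bramson/Lalley--Sellke/A\"id\'ekon program that those references carry out, and steps (i) and (ii) are accurate summaries of the standard ingredients.

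Step (iii), however, contains a genuine gap as written. The functional equation
\[
G_\infty(x)=\E\Bigl[\prod_{|\sigma|=k}G_\infty\bigl(x+\beta_c k - X_\sigma\bigr)\Bigr]
\]
is shift-invariant: if $G_\infty$ solves it, so does $x\mapsto G_\infty(x+a)$ for every $a$, i.e.\ the Lalley--Sellke form with any $c>0$. Hence ``uniqueness of the fixed point in the class of Laplace transforms'' is false without an extra normalisation, and the fixed-point equation alone cannot upgrade subsequential convergence to full convergence. Saying that $c$ is ``pinned down by matching the right tail'' pushes the difficulty elsewhere: you would need to prove that $\Prob(\mathcal{X}_n>x)\sim c\,x\,e^{-\beta_c x}$ with a \emph{specific} $c$ uniformly in $n$, which is essentially the heart of the theorem. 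In A\"id\'ekon's actual proof one does not pass through uniqueness of travelling waves; instead one shows directly that, conditionally on $\mathcal{F}_k$, $\Prob(\mathcal{X}_n\leq x\mid\mathcal{F}_k)$ converges to $\exp\bigl(-c\,e^{-\beta_c x}D_k\bigr)$ up to errors vanishing as $k\to\infty$, where $D_k\to M'(\beta_c)$ is the derivative martingale. This requires sharp local estimates for the probability that a single particle is near the tip with its trajectory below a barrier (the ``$(\beta_c k - X_\sigma)e^{\beta_c X_\sigma-\beta_c^2 k}$'' weight emerges from the ballot factor), and that is the step your sketch does not supply.
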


\begin{remark} One feature of this limiting distribution is that it has a tail of the formm $xe^{-\beta_c x}$ (as $x\to \infty)$ so this is a problem of extreme value statistics that is not in the Gumbel universality class. As conjectured in the physical literature \cite{cld,FLDR2009} such a tail together with the coefficient $3/2$ in (\ref{extreme}) are characteristic features of the generic sequences, processes and fields with logarithmic correlations, and are intimately related to the powerlaw tail of the partition function moments (\ref{momtailhightemp}), see \cite{FLDR2012}. Rigorous results corroborating  this conjecture can be found e.g. in \cite{DRZ2015} and references therein.

\end{remark}

The proof of Theorem \ref{th:cascmax} is quite technical, but a fundamental idea is proving that branches $X_\sigma$ that live on the scale of the maximum end up there along paths which perform an excursion around a linear path of speed $\sqrt{2\log 2}$.

\section{Scaling of moments in the LME recursion}

Compared to the branching random walk, there is an immediate difficulty in analyzing the convergence of the solution to the LME recursion (\ref{LME}). Namely even if one realizes the random variables on the same probability space in tree form, one does not have a martingale structure. We have not noticed any monotonicity or really any contractivity either that would enable proving convergence in general. Thus instead of focusing on proving convergence, we will begin by studying the moments of $P_n(q)$ satisfying the LME recursion (\ref{LME}). Using arguments similar to ones used for the branching random walk, we will see that there is a phase transition: there is an explicit $q_c$ (approximately $2.4056$) such that for $\frac{1}{2}<q<q_c$, for suitable $p>1$, $\E(P_n(q)^p)\sim \E(P_n(q))^p$ while for $q>q_c$, $\frac{P_n(q)}{\E(P_n(q))}\stackrel{d}{\to}0$. In the following sections we will also use these results to show that for $\frac{1}{2}<q<1$, $\frac{P_n(q)}{\E(P_n(q))}$ converges to a non-trivial random variable. For brevity, let us introduce a name for this random variable.

\begin{definition}
Let $P_n(q)$ be given by the LME-recursion (\ref{LME}) and let $q>\frac{1}{2}$. We then write
\begin{equation}
\Pi_n(q)=\frac{P_n(q)}{\E(P_n(q))}.
\end{equation}
\end{definition}

\vspace{0.3cm}

Note that if one is able to prove that $\sup_n\E(\Pi_n(q)^p)<\infty$ for some $p>1$, then one has immediately that there exist constants $0<C_1,C_2<1$ such that for all $n$,

\begin{equation}
\begin{array}{ccc}
\mathbb{P}(\Pi_n(q)>2)\leq C_1 & \mathrm{and} & \mathbb{P}\left(\Pi_n(q)>\frac{1}{2}\right)\geq C_2.
\end{array}
\end{equation}

The first inequality is simply Markov's inequality, since $\E(\Pi_n(q))=1$. The second inequality is a Paley-Zygmund type inequality: we have by H\"older's inequality

\begin{align*}
1&=\E\left(\Pi_n(q)\mathbf{1}\left\lbrace \Pi_n(q)\geq \frac{1}{2}\right\rbrace\right)+\E\left(\Pi_n(q)\mathbf{1}\left\lbrace \Pi_n(q)\leq \frac{1}{2}\right\rbrace\right)\\
&\leq (\E(\Pi_n(q)^p))^{\frac{1}{p}}\left(\mathbb{P}\left(\Pi_n(q)\geq \frac{1}{2}\right)\right)^{1-\frac{1}{p}}+\frac{1}{2}
\end{align*}

\noindent yielding

\begin{equation}
\mathbb{P}\left(\Pi_n(q)\geq \frac{1}{2}\right)\geq \left(\frac{1}{2(\E(\Pi_n(q)^p))^{\frac{1}{p}}}\right)^{\frac{p}{p-1}}.
\end{equation}

So in particular, the scaling properties of the moments imply that we can't have $\Pi_n(q)\approx \infty$ with high probability or $\Pi_n(q)\approx 0$ with high probability. Thus for such $q$, $\E(P_n(q))$ is a scale (though we expect it's the only one) at which $P_n(q)$ lives.

\vspace{0.3cm}

For our proof, it will be important to understand the asymptotics of $\E(P_n(q))$. To do this, we consider $T_n(q):=\frac{\E(P_{n+1}(q))}{\E(P_n(q))}=\E(\sin^{2q}\theta_n+\cos^{2q}\theta_n)$. The central result for this quantity is

\begin{lemma}\label{le:estT}
For $q>\frac{1}{2}$, we have for $\alpha=\min(2,2q)$

\begin{equation}
1-T_n(q)=\E(1-\sin^{2q}\theta_n-\cos^{2q}\theta_n)=\frac{b}{n}T(q)+\mathcal{O}\left(\left(\frac{b}{n}\right)^{\alpha}\right),
\end{equation}

\noindent where

\begin{align}
T(q)
&=\frac{\sqrt{\pi}}{8}\frac{\Gamma\left(q-\frac{1}{2}\right)}{\Gamma(q-1)}.
\end{align}
\end{lemma}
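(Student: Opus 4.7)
The plan is to apply Lemma \ref{sing} directly, with $f(\theta) = 1 - \sin^{2q}\theta - \cos^{2q}\theta$, and then to evaluate the resulting deterministic integral in closed form. First I would verify the hypotheses of Lemma \ref{sing}. The function $f$ is smooth and even on $[-\pi/4,\pi/4]$, and by Taylor expansion around the origin, $\cos^{2q}\theta = 1 - q\theta^{2} + O(\theta^{4})$ while $\sin^{2q}\theta = |\theta|^{2q}(1 + O(\theta^{2}))$, so
$$
f(\theta) \;=\; q\,\theta^{2} \;-\; |\theta|^{2q} \;+\; O\bigl(\theta^{\min(4,\,2q+2)}\bigr).
$$
Hence $f(\theta) = O(|\theta|^{\alpha})$ with $\alpha = \min(2,2q)$, and the condition $q > 1/2$ precisely gives $\alpha > 1$, as Lemma \ref{sing} requires. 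Applying the lemma (with the normalization $\beta = 1$ fixed in the remark following it) yields
$$
1 - T_{n}(q) \;=\; \E(f(\theta_{n})) \;=\; \frac{b}{n}\,T(q) \;+\; O\!\left(\Bigl(\frac{b}{n}\Bigr)^{\!\alpha}\right),
\qquad
T(q) \;:=\; \int_{-\pi/4}^{\pi/4}\frac{f(\theta)}{\sin^{2}2\theta}\,d\theta.
$$

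The remaining task is to compute $T(q)$ explicitly. I would exploit the two symmetries $\theta \mapsto -\theta$ and $\theta \mapsto \pi/2 - \theta$: both $f$ and $\sin^{2}2\theta = 4\sin^{2}\theta\cos^{2}\theta$ are preserved by both symmetries, and combining this with the partial-fraction identity $(\sin^{2}\theta\cos^{2}\theta)^{-1} = \csc^{2}\theta + \sec^{2}\theta$ reduces the integral to a single one on $[0,\pi/2]$ weighted by $\csc^{2}\theta$. The resulting integrand splits as
$$
\frac{1 - \cos^{2q}\theta}{\sin^{2}\theta} \;-\; \sin^{2q-2}\theta.
$$
The second piece is a standard Beta integral equal to $\tfrac{\sqrt{\pi}}{2}\,\Gamma(q-\tfrac12)/\Gamma(q)$. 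For the first, I would integrate by parts with $u = 1-\cos^{2q}\theta$ and $dv = \csc^{2}\theta\,d\theta$; the boundary terms vanish because $(1-\cos^{2q}\theta)\cot\theta = O(\theta)$ at $0$ and $\cot(\pi/2)=0$, leaving $2q\int_{0}^{\pi/2}\cos^{2q}\theta\,d\theta$, again a Beta integral. Collecting and using the factorial identity $(q-1)\Gamma(q-1) = \Gamma(q)$ to write $\Gamma(q+\tfrac12) - \tfrac12\Gamma(q-\tfrac12) = (q-1)\Gamma(q-\tfrac12)$ produces the claimed $\Gamma$-ratio.

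There is no deep obstacle here: Lemma \ref{sing} does all of the analytic heavy lifting (it is where the interplay between the singular weight $(\sin 2\theta)^{-2}$ of $r_{n}$ and the regularizer $\chi_{b/n}$ is handled), and what remains is a computation of a fixed trigonometric integral. The only minor care needed is (i) to check that the expansion of $f$ near the origin really yields the exponent $\alpha=\min(2,2q)$ — in particular this is the precise source of the error term $O((b/n)^{\min(2,2q)})$ — and (ii) to confirm that the boundary contributions in the integration by parts vanish for all $q > 1/2$, which they do thanks to the same quadratic (or better) vanishing of $1-\cos^{2q}\theta$ at $\theta=0$.
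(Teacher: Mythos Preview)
Your proposal is correct and follows the paper's line exactly for the analytic part: both invoke Lemma~\ref{sing} with $f(\theta)=1-\sin^{2q}\theta-\cos^{2q}\theta$, after checking $f(\theta)=\mathcal O(|\theta|^{\min(2,2q)})$, and both use the symmetries $\theta\mapsto-\theta$ and $\theta\mapsto\pi/2-\theta$ to rewrite $T(q)$ as $\tfrac14\int_0^{\pi/2}(\sin\theta\cos\theta)^{-2}f(\theta)\,d\theta$.

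The genuine difference is in the evaluation of this integral. The paper introduces a complex parameter $z$, writes $T(q,z)=\tfrac14\int_0^{\pi/2}f(\theta)(\sin\theta\cos\theta)^{z-2}\,d\theta$, recognizes each term as a Beta integral for $\Re z>1$, and then analytically continues the resulting $\Gamma$-ratio to $z=0$ (checking the pole at $z=1$ cancels). Your route is more elementary: the partial-fraction identity $(\sin\theta\cos\theta)^{-2}=\csc^2\theta+\sec^2\theta$ plus the $\theta\mapsto\pi/2-\theta$ symmetry reduces $T(q)$ to $\tfrac12\int_0^{\pi/2}f(\theta)\csc^2\theta\,d\theta$, which you then split and handle with one Beta integral and one integration by parts. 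Your method avoids analytic continuation entirely and stays within real calculus, which is arguably cleaner here; the paper's method has the advantage of being more systematic and would extend readily to related integrals with more complicated weights. Both computations yield the same $\Gamma$-ratio (indeed your calculation gives $\tfrac{\sqrt\pi}{2}\Gamma(q-\tfrac12)/\Gamma(q-1)$, matching the paper's own final line of computation and its later use of $T(q)$; the $\tfrac{\sqrt\pi}{8}$ in the displayed statement appears to be a typo for $\tfrac{\sqrt\pi}{2}$, since $-\tfrac14\Gamma(-\tfrac12)=\tfrac{\sqrt\pi}{2}$).
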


\begin{remark}
 Note that $T(q)$ is finite for $q>\frac{1}{2}$.
\end{remark}

\begin{remark} We choose to write the errors in terms of $\frac{b}{n}$ instead of $\frac{1}{n}$ as one might potentially be interested in making $b$ depend on $n$.
\end{remark}

\begin{proof}

Note that as $\theta\to 0$, $1-\sin^{2q}\theta-\cos^{2q}\theta=\mathcal{O}(|\theta|^{\min(2q,2)})$, so by Lemma \ref{sing},

\begin{equation}
1-T_n(q)=\frac{b}{n}\int_{-\frac{\pi}{4}}^{\frac{\pi}{4}}\frac{d\theta}{\sin^{2}2\theta}(1-\sin^{2q}\theta-\cos^{2q}\theta)+\mathcal{O}((bn^{-1})^{\alpha}).
\end{equation}

Recalling our convention for $\sin^{2q}\theta$ and $\cos^{2q}\theta$, we see that the integrand is symmetric under $\theta\to-\theta$. We also note that for $\theta\in[0,\pi/4]$, $(1-\cos^{2q}\theta-\sin^{2q}\theta)/\sin^22\theta$ is symmetric under $\theta\to -\theta+\pi/2$ so we find that

\begin{equation}
T(q)=\frac{1}{4}\int_0^{\frac{\pi}{2}}\frac{d\theta}{\sin^2\theta\cos^2\theta}(1-\sin^{2q}\theta-\cos^{2q}\theta).
\end{equation}

To see that the integral can indeed be written as a ratio of $\Gamma$-functions, let us introduce an auxiliary complex parameter to the integral:

\begin{equation}
T(q,z):=\frac{1}{4}\int_{0}^{\frac{\pi}{2}}(1-\sin^{2q}\theta-\cos^{2q}\theta)(\sin \theta \cos\theta)^{z-2}d\theta.
\end{equation}

We see that $T(q,0)=T(q)$ and $T(q,z)$ is analytic in $z$ for $\mathrm{Re}(z)>\max(1-2q,-1)$. For $\mathrm{Re}(z)$ large enough, we see that each term in the integral can be expressed as a Beta-function, which we can write as a ratio of $\Gamma$-functions: for $\mathrm{Re}(z)>1$,

\begin{equation}
T(q,z)=\frac{1}{8}\frac{\Gamma(\frac{z-1}{2})^2}{\Gamma(z-1)}-\frac{1}{4}\frac{\Gamma(q+\frac{z-1}{2})\Gamma(\frac{z-1}{2})}{\Gamma(q+z-1)}.
\end{equation}

As the only singularities of the $\Gamma$-function in the complex plane are simple poles at non-positive integers, we see that for $\mathrm{Re}(z)>\max(1-2q,-1)$, the only possible singularity of the meromorphic function

\begin{equation}
z\mapsto\frac{1}{8}\frac{\Gamma(\frac{z-1}{2})^2}{\Gamma(z-1)}-\frac{1}{4}\frac{\Gamma(q+\frac{z-1}{2})\Gamma(\frac{z-1}{2})}{\Gamma(q+z-1)}
\end{equation}

\noindent is at $z=1$, but we see that at $z=1$, the poles of the two different terms cancel and

\begin{equation}
T(q,z)=\frac{1}{8}\frac{\Gamma(\frac{z-1}{2})^2}{\Gamma(z-1)}-\frac{1}{4}\frac{\Gamma(q+\frac{z-1}{2})\Gamma(\frac{z-1}{2})}{\Gamma(q+z-1)}
\end{equation}

\noindent even for $\mathrm{Re}(z)>\max(1-2q,-1)$ . In particular,

\begin{equation}
T(q)=T(q,0)=-\frac{1}{4}\frac{\Gamma(q-\frac{1}{2})\Gamma(-\frac{1}{2})}{\Gamma(q-1)}=\frac{\sqrt{\pi}}{8}\frac{\Gamma(q-\frac{1}{2})}{\Gamma(q-1)}.
\end{equation}

\end{proof}

We shall also need similar estimates for joint moments of $\sin\theta_n$ and $\cos\theta_n$.

\begin{lemma}\label{le:estB}
For $q>\frac{1}{2}$ and $j,k\geq 1$ we have for $\alpha=\min(2qj,2qk)>1$
\begin{equation}
\E(\sin^{2qj}\theta_n \cos^{2qk}\theta_n)=\frac{b}{n}\frac{1}{2}\int_0^{\frac{\pi}{4}}\sin^{2qj-2}\theta \cos^{2qk-2}\theta d\theta+\mathcal{O}\left(\left(\frac{b}{n}\right)^{\alpha}\right).
\end{equation}

\end{lemma}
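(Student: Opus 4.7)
The plan is to apply Lemma \ref{sing} directly to the bounded test function $f(\theta) := \sin^{2qj}\theta\,\cos^{2qk}\theta = (\sin^2\theta)^{qj}(\cos^2\theta)^{qk}$, and then simplify the resulting leading integral via the double-angle identity.

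First I would verify the hypothesis of Lemma \ref{sing}. The function $f$ is non-negative and bounded by $1$ on $[-\pi/4,\pi/4]$. Since $(\cos^2\theta)^{qk}$ stays bounded while $(\sin^2\theta)^{qj} = \mathcal{O}(|\theta|^{2qj})$ as $\theta\to 0$, we have $f(\theta) = \mathcal{O}(|\theta|^{2qj})$, with $2qj \geq \alpha > 1$ by the hypothesis $j,k\geq 1$ and $q>1/2$. Note that the order of vanishing at $0$ is controlled solely by the $\sin$ factor since $\cos^{2qk}\theta\to 1$; the $k$-dependence does not enter here. Applying Lemma \ref{sing} (with the $\beta=1$ normalization adopted after its statement) yields
\begin{equation*}
\E(f(\theta_n)) = \frac{b}{n}\int_{-\pi/4}^{\pi/4}\frac{f(\theta)}{\sin^2(2\theta)}\,d\theta + \mathcal{O}\!\left(\left(\tfrac{b}{n}\right)^{2qj}\right).
\end{equation*}
Because $(b/n)^{2qj} \leq (b/n)^\alpha$ for $b/n \leq 1$, the error has the form $\mathcal{O}((b/n)^\alpha)$ claimed in the lemma.

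Next I would compute the leading coefficient. Since $f$ depends only on $\sin^2\theta$ and $\cos^2\theta$ and $\sin^2(2\theta)$ is even, the integrand is even in $\theta$, giving
\begin{equation*}
\int_{-\pi/4}^{\pi/4}\frac{f(\theta)}{\sin^2(2\theta)}\,d\theta \;=\; 2\int_0^{\pi/4}\frac{\sin^{2qj}\theta\,\cos^{2qk}\theta}{\sin^2(2\theta)}\,d\theta.
\end{equation*}
Using $\sin^2(2\theta) = 4\sin^2\theta\cos^2\theta$ this collapses to
\begin{equation*}
\frac{1}{2}\int_0^{\pi/4}\sin^{2qj-2}\theta\,\cos^{2qk-2}\theta\,d\theta,
\end{equation*}
which is precisely the stated leading coefficient. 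Finiteness of this integral is automatic: near $\theta=0$ the integrand behaves like $\theta^{2qj-2}$, integrable because $2qj>1$, and near $\theta=\pi/4$ both $\sin\theta$ and $\cos\theta$ are bounded away from $0$ and $\infty$.

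There is really no substantive obstacle here. All the analytic work (the regularization of the singular density $r_n$ near $\theta = 0$ and the extraction of the linear-in-$b/n$ behaviour) has already been carried out in Lemma \ref{sing}; the present statement is essentially a corollary, with the only non-trivial input being the observation that the order of vanishing of $f$ at $\theta=0$ exceeds $1$, and an elementary double-angle simplification.
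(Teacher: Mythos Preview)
Your proof is correct and follows essentially the same approach as the paper: both apply Lemma~\ref{sing} after noting the order of vanishing of $f(\theta)=\sin^{2qj}\theta\cos^{2qk}\theta$ at $\theta=0$, and then use the evenness in $\theta$ (the paper's ``symmetry under $\theta\to-\theta$ hidden in our notational convention'') to reduce to $[0,\pi/4]$. Your write-up is simply more explicit---you spell out the double-angle simplification $\sin^2(2\theta)=4\sin^2\theta\cos^2\theta$ and the integrability check---and you observe the slightly sharper fact that the vanishing order at $0$ is $2qj$ rather than $\alpha=\min(2qj,2qk)$, which the paper does not bother to isolate.
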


\begin{proof}
Noting that as $\theta\to 0$, $\sin^{2qj}\theta \cos^{2qk}\theta=\mathcal{O}(|\theta|^\alpha)$, we again refer to Lemma \ref{sing}. We also make use of the symmetry under $\theta\to-\theta$ (hidden in our notational convention).
\end{proof}

\vspace{0.3cm}

Before moving on to studying the moments of $P_n$, we fix some notation.

\begin{lemma}
There is only a single solution (which we call $q_c$ from now on) to the equation $qT'(q)=T(q)$ in the domain $(\frac{1}{2},\infty)$.
\end{lemma}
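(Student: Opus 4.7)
The plan is to split the domain $(1/2, \infty)$ into $(1/2, 1]$, on which I will rule out any solutions, and $(1, \infty)$, on which I will rewrite $qT'(q) = T(q)$ as a Laplace-transform identity and prove existence of exactly one root via the classical variation-diminishing property.

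First I differentiate under the integral in the representation
\[
T(q) = \frac{1}{8}\int_0^1 \frac{1-u^q-(1-u)^q}{(u(1-u))^{3/2}}\,du
\]
to obtain $T'(q) > 0$ for every $q > 1/2$, since the integrand of $T'(q)$ equals $-\bigl(u^q\log u + (1-u)^q\log(1-u)\bigr)(u(1-u))^{-3/2}$, which is strictly positive on $(0,1)$. Combined with $T(1) = 0$, this forces $T(q) < 0$ on $(1/2, 1)$ and $T(q) > 0$ on $(1, \infty)$. Thus on $(1/2, 1]$ one has $qT'(q) > 0 \ge T(q)$, with strict inequality at $q=1$ because $T'(1) = \pi/8 \neq 0$, so there are no solutions on this subinterval.

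For $q > 1$ the equation $qT'(q) = T(q)$ is equivalent to $\psi(q-1/2) - \psi(q-1) = 1/q$, where $\psi$ is the digamma function. Using the factorization $1 - e^{-t} = (1 - e^{-t/2})(1 + e^{-t/2})$ inside the standard integral representation of $\psi$, one has, for $x > 0$,
\[
\psi(x+1/2) - \psi(x) = \int_0^\infty \frac{e^{-xt}}{1+e^{-t/2}}\,dt.
\]
Setting $x = q-1$ and combining with $1/q = \int_0^\infty e^{-(q-1)t}e^{-t}\,dt$, the equation becomes $H(q-1) = 0$, where
\[
H(s) := \int_0^\infty e^{-st}\, g(t)\,dt, \qquad g(t) := \frac{1}{1+e^{-t/2}} - e^{-t}.
\]
A short calculation gives $g(0) = -1/2$, $g(\infty) = 1$, and
\[
g'(t) = \frac{e^{-t/2}}{2(1+e^{-t/2})^2} + e^{-t} > 0,
\]
so $g$ is strictly increasing and has a unique zero at some $t_0 \in (0, \infty)$, with $g \le 0$ on $(0, t_0]$ and $g \ge 0$ on $[t_0, \infty)$.

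The final step is the variation-diminishing argument. Define $\widetilde{H}(s) := e^{s t_0} H(s) = \int_0^\infty e^{-s(t-t_0)} g(t)\,dt$. Then
\[
\widetilde{H}'(s) = -\int_0^\infty (t - t_0)\, e^{-s(t-t_0)}\, g(t)\,dt < 0,
\]
because $(t - t_0)\, g(t) \ge 0$ for all $t \ge 0$, with strict positivity off $\{t_0\}$. Hence $\widetilde{H}$ is strictly decreasing on $(0,\infty)$, so $H$ has at most one zero there, and therefore $qT'(q) = T(q)$ has at most one solution in $(1, \infty)$. Existence follows from the boundary behavior: as $q \to 1^+$, $\psi(q-1) \to -\infty$ yields $H(q-1) \to +\infty$; as $q \to \infty$, the asymptotics $\psi(q-1/2) - \psi(q-1) = 1/(2(q-1)) + O(q^{-2})$ give $H(q-1) \sim -1/(2q) < 0$, so by continuity $H$ changes sign. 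The only conceptually delicate step is identifying the right Laplace-transform reformulation so that the variation-diminishing argument applies cleanly; everything else is elementary.
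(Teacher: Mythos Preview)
Your argument is correct, but it is considerably more elaborate than the paper's. The paper simply sets $H(q)=qT'(q)-T(q)$ and observes that $H'(q)=qT''(q)$; strict concavity of $T$ (i.e.\ $T''<0$) is immediate from differentiating twice under the integral sign in the representation
\[
T(q)=\frac{1}{4}\int_0^{\pi/2}\frac{1-\sin^{2q}\theta-\cos^{2q}\theta}{\sin^2\theta\cos^2\theta}\,d\theta,
\]
so $H$ is strictly decreasing on the whole interval $(1/2,\infty)$ and hence has at most one zero. The sign change of $H$ is then checked exactly as you do. Your route---splitting off $(1/2,1]$, passing to the digamma identity $\psi(q-1/2)-\psi(q-1)=1/q$, recasting it as a Laplace transform $\int_0^\infty e^{-st}g(t)\,dt=0$, and invoking the single sign change of $g$---is a valid alternative and a nice illustration of the variation-diminishing principle, but it effectively reproves monotonicity of $H$ on $(1,\infty)$ by a more indirect path. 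The paper's concavity observation gives global monotonicity of $H$ in one line and avoids the case split, the digamma integral representation, and the Laplace-transform machinery altogether.
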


\begin{proof}
As this is equivalent to showing that the function $H(q)=qT'(q)-T(q)$ has only a single zero, it is enough to check that $H'(q)=qT''(q)$ does not change its sign and that $H(q)$ does. The first claim is clear for example from the representation

\begin{equation}
T(q)=\frac{1}{4}\int_0^{\frac{\pi}{2}}\frac{d\theta}{\sin^2\theta\cos^2\theta}(1-\sin^{2q}{\theta}-\cos^{2q}\theta).
\end{equation}

For the second one we note that as $T(q)<0$ for $q<1$ and $T'(q)>0$ for all $q$, we see that $qT'(q)-T(q)>0$ for small $q$. For large $q$, one can use the asymptotics of the polygamma function of order 0 to check that $H(q)<0$ for large $q$ (in fact, one can check that $H(q)\sim -\frac{1}{2}\sqrt{q}$ for large $q$). Moreover, numerically one finds that the root $q_c$ is at 2.4056.
\end{proof}

We also need yet another elementary remark to describe which moments of $\Pi_n(q)$ are bounded

\begin{lemma}
For $\frac{1}{2}<q<q_c$, the set  $\lbrace p>1:T(pq)-pT(q)>0\rbrace$ is an open interval.
\end{lemma}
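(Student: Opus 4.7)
Set $g(p) := T(pq) - pT(q)$ for $p \geq 1$. Since $q > 1/2$ implies $pq > 1/2$, the integral representation
\[
T(q) = \frac{1}{4}\int_0^{\pi/2} \frac{1 - \sin^{2q}\theta - \cos^{2q}\theta}{\sin^2\theta \cos^2\theta}\, d\theta
\]
established in the proof of Lemma \ref{le:estT} makes $g$ a $C^\infty$ function on $[1,\infty)$. In particular, $\lbrace p > 1 : g(p) > 0 \rbrace$ is automatically open as the preimage of $(0,\infty)$ under a continuous function, so the task reduces to showing this set is connected.

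The plan is to combine (i) strict positivity of $g'(1)$ with (ii) strict concavity of $g$. For (i), one has $g(1) = 0$ and
\[
g'(1) = q T'(q) - T(q) = H(q),
\]
where $H$ is the function from the preceding lemma. As observed in its proof, $H > 0$ near $q = 1/2$ and $H' = qT'' $ has constant sign, so since $H(q_c) = 0$, $H$ must be strictly positive on $(1/2, q_c)$. Because $q < q_c$ by hypothesis, $g'(1) > 0$, and hence $g > 0$ on some right-neighborhood of $1$, showing that our set is non-empty.

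For (ii), I would differentiate the integral representation twice under the integral sign; the interchange is justified by a routine dominated-convergence argument, using that for $q$ ranging in a compact subinterval of $(1/2,\infty)$ the integrands $\sin^{2q}\theta\,(\log\sin\theta)^2/(\sin^2\theta\cos^2\theta)$ and its $\cos$-analogue are integrable on $(0,\pi/2)$. The resulting formula
\[
T''(q) = -\int_0^{\pi/2} \frac{\sin^{2q}\theta\,(\log\sin\theta)^2 + \cos^{2q}\theta\,(\log\cos\theta)^2}{\sin^2\theta \cos^2\theta}\, d\theta < 0
\]
shows $T$ is strictly concave on $(1/2,\infty)$, whence $g''(p) = q^2 T''(pq) < 0$ on $[1,\infty)$.

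Finally, a strictly concave $C^2$ function on $[1,\infty)$ has at most two zeros. Since $g(1) = 0$ and $g'(1) > 0$, there is a unique $p^* \in (1, \infty]$ with $g > 0$ on $(1, p^*)$ and $g \leq 0$ on $[p^*, \infty)$ (the latter being empty when $p^* = \infty$). Thus $\lbrace p > 1 : g(p) > 0 \rbrace = (1, p^*)$, an open interval, as required. The only mildly technical step is the dominated-convergence justification for the formula for $T''$; this is standard for $q > 1/2$ and I do not expect a genuine obstacle.
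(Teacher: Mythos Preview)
Your proof is correct and follows essentially the same approach as the paper: define $g(p)=T(pq)-pT(q)$, use $g''(p)=q^2T''(pq)<0$ (strict concavity, from the integral representation of $T$) together with $g(1)=0$ and $g'(1)=qT'(q)-T(q)>0$ for $q<q_c$ to conclude the positivity set is a single interval $(1,p^*)$. You are a bit more explicit than the paper in justifying $T''<0$ via differentiation under the integral sign and in arguing that $H(q)=qT'(q)-T(q)>0$ on $(1/2,q_c)$, but the underlying argument is identical.
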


\begin{proof}
For $g(p):=T(pq)-pT(q)$, $g'(p)=qT'(pq)-T(q)$ and $g''(p)=q^2 T''(pq)$. If the set was not an interval, there would be several points at which $g'(p)=0$. But since $g''$ is strictly negative, $g'$ is strictly decreasing, we see that this is impossible. The interval is non-empty since for $q<q_c$, $g'(1)>0$.
\end{proof}

\begin{remark} For $q<q_c$, we will write $\lbrace p>1: T(pq)-pT(q)>0\rbrace$ as $(1,p^*(q))$. Moreover, as $\partial_q (T(pq)-pT(q))=p(T'(pq)-T'(q))<0$ for $p>1$ (as $T''(q)<0$), we see that $p^*(q)$ is decreasing in $q$.
\end{remark}

We are now in a position to characterize which moments are finite:

\begin{proposition}
For $\frac{1}{2}<q<q_c$, $\sup_n \E(\Pi_n(q)^p)<\infty$ for $1<p<p^*(q)$. For $p>p^*(q)$, $\E(\Pi_n(q)^p)\to \infty$ as $n\to \infty$.
\end{proposition}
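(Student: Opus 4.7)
The plan is to work directly with the normalized moments $u_n^{(p)} := \E(\Pi_n(q)^p)$ and derive matching recursive upper and lower bounds from the LME recursion \eqref{LME}. First, by Lemma~\ref{le:estT}, $\log T_n(q) = -bT(q)/n + \mathcal{O}(n^{-\alpha})$ with $\alpha := \min(2,2q) > 1$, so summing yields $\E(P_n(q)) \sim C\, n^{-bT(q)}$ for some $C > 0$. Writing the recursion in normalized form,
\[
\Pi_{n+1}(q) \stackrel{d}{=} \widetilde A_n\, \Pi_n^{(1)}(q) + \widetilde B_n\, \Pi_n^{(2)}(q), \qquad \widetilde A_n := \frac{\sin^{2q}\theta_n}{T_n(q)},\ \widetilde B_n := \frac{\cos^{2q}\theta_n}{T_n(q)},
\]
and applying Lemma~\ref{le:estT} with $q$ replaced by $pq$ gives the contractive factor
\[
\E(\widetilde A_n^p + \widetilde B_n^p) = \frac{T_n(pq)}{T_n(q)^p} = 1 - \frac{\delta}{n} + \mathcal{O}(n^{-\alpha}), \qquad \delta := b\bigl(T(pq)-pT(q)\bigr),
\]
which is strictly less than $1$ for $n$ large exactly when $p \in (1, p^*(q))$.

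For the divergent regime $p > p^*(q)$, the elementary bound $(x+y)^p \geq x^p + y^p$ together with independence gives $u_{n+1}^{(p)} \geq \E(\widetilde A_n^p + \widetilde B_n^p)\,u_n^{(p)}$; iteration from $u_1^{(p)}=1$ yields $u_n^{(p)} \geq c\, n^{-\delta}$ for some $c > 0$, and since $\delta < 0$ here, $u_n^{(p)} \to \infty$.

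For the subcritical regime $p \in (1, p^*(q))$ the argument splits by the size of $p$. For $p \in (1,2]$, I would use the asymmetric inequality $(x+y)^p \leq x^p + y^p + p\, x y^{p-1}$ (obtained from $(1+t)^p \leq 1 + pt + t^p$, itself a consequence of the subadditivity $(1+t)^{p-1} \leq 1 + t^{p-1}$ of the concave map $s\mapsto s^{p-1}$). Assigning $x = \widetilde A_n\Pi_n^{(1)}$ (the typically small term) and $y = \widetilde B_n\Pi_n^{(2)}$, independence and $\E(\Pi_n)=1$ yield
\[
u_{n+1}^{(p)} \leq \frac{T_n(pq)}{T_n(q)^p}\,u_n^{(p)} + p\,\E(\widetilde A_n \widetilde B_n^{p-1})\,u_n^{(p-1)}.
\]
Since $\widetilde B_n^{p-1}$ is uniformly bounded in $n$, one has $\E(\widetilde A_n \widetilde B_n^{p-1}) \leq C\,\E(\sin^{2q}\theta_n) = \mathcal{O}(b/n)$ by Lemma~\ref{sing} (applied to $\theta\mapsto\sin^{2q}\theta$, which vanishes at $0$ to order $2q > 1$). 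Jensen's inequality gives $u_n^{(p-1)}\leq 1$, so the recursion reduces to the scalar form $u_{n+1}^{(p)} \leq (1-\delta/n)u_n^{(p)} + D/n$, which admits a uniformly bounded solution (any $M \geq D/\delta$ dominating the finitely many initial terms satisfies $u_n^{(p)}\leq M \Rightarrow u_{n+1}^{(p)}\leq M$). For $p>2$, the asymmetric inequality fails (it is violated at $x=y$), and I would fall back on the symmetric version $(x+y)^p \leq x^p + y^p + C_p(xy^{p-1}+x^{p-1}y)$; both cross moments are now $\mathcal{O}(b/n)$ directly by Lemma~\ref{le:estB} (since $1$ and $p-1$ are both $\geq 1$), and an induction on $\lfloor p \rfloor$ supplies $\sup_n u_n^{(p-1)} < \infty$ via the inductive hypothesis, closing the same scalar recursion.

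The main technical difficulty I anticipate is precisely the transition at $p=2$: below it one needs the asymmetric inequality because Lemma~\ref{le:estB} does not literally apply when the exponent $k=p-1$ is less than $1$, while above it one needs two-sided cross-moment control in the symmetric inequality. Moreover, since $T(q) \to -\infty$ as $q \to 1/2^+$ (the simple pole of $\Gamma(q-1/2)$), $p^*(q)$ is unbounded near $q = 1/2$, so the induction on $\lfloor p \rfloor$ may in principle have to be iterated arbitrarily many times, with each step respecting the integrability conditions underlying Lemmas~\ref{sing} and~\ref{le:estB}.
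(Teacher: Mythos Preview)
Your proposal is correct and follows the same overall architecture as the paper --- normalize, use superadditivity $(x+y)^p\geq x^p+y^p$ for the divergent regime, and for the subcritical regime bound $(x+y)^p$ by $x^p+y^p$ plus cross terms that are $\mathcal{O}(b/n)$, then close a scalar recursion --- but the implementation differs in two places worth noting.

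First, the paper expands $(x+y)^p$ by writing $p=k\cdot\frac{p}{k}$ with $k=\lceil p\rceil$, expanding $(x+y)^k$ by the binomial theorem, and applying subadditivity of $t\mapsto t^{p/k}$. This produces cross terms of the form $\sin^{2qpj/k}\theta\cos^{2qp(k-j)/k}\theta$ paired with moments $\E(\Pi_n^{pj/k})$, and the paper then has to split into the cases $q>1$ and $q<1$ to verify that the relevant angular integral converges (for $q<1$ it jumps straight to $p=2$). Your asymmetric inequality $(x+y)^p\leq x^p+y^p+pxy^{p-1}$ for $p\in(1,2]$ is cleaner here: the cross term carries a full factor $\sin^{2q}\theta$, so Lemma~\ref{sing} applies directly without any case split, and Jensen kills $u_n^{(p-1)}\leq 1$ immediately. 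For $p>2$ your symmetric two-term inequality $(x+y)^p\leq x^p+y^p+C_p(xy^{p-1}+x^{p-1}y)$ plays the same role as the paper's binomial expansion, with the induction stepping in units of $1$ rather than through the full binomial sum; both work, and Lemma~\ref{le:estB} applies to your cross moments since the exponents $1$ and $p-1$ are both $\geq 1$.

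Second, to close the scalar recursion $u_{n+1}\leq(1-\delta/n)u_n+D/n$ the paper uses the trick ``if $u_{n+1}\geq u_n$ then $u_{n+1}$ is bounded by an explicit constant; otherwise step back,'' while you iterate directly and observe that any $M\geq D/\delta$ dominating the initial segment is preserved. These are equivalent; your version is perhaps more transparent but you should make explicit that the $\mathcal{O}(n^{-\alpha})$ correction to the contractive factor is absorbed for $n\geq n_0$, and that the finitely many initial terms are trivially bounded.
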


\begin{proof}
Let us begin by considering $p<p^*(q)$. By Jensen's inequality, for $p'<p$, $\E(\Pi_n(q)^{p'})=\E(\Pi_n(q)^{p\frac{p'}{p}})\leq (\E(\Pi_n(q)^p))^{\frac{p'}{p}}$. Thus it is enough to consider $p$ close enough to $p^*(q)$. Let us further assume that $p\leq 2$ for now and use the subadditivity of $x\mapsto x^{\frac{p}{2}}$ to see that

\begin{align*}
\E(P_{n+1}(q)^p)&=\E\left((\sin^{2q}\theta_nP_n^{(1)}(q)+\cos^{2q}\theta_nP_n^{(2)}(q))^{2\frac{p}{2}}\right)\\
&\leq \E(\sin^{2qp}\theta_n P_n^{(1)}(q)^p+2\sin^{qp}\theta_n\cos^{qp}\theta_nP_n^{(1)}(q)^{\frac{p}{2}}P_n^{(2)}(q)^{\frac{p}{2}})\\
&+\E(\cos^{2qp}\theta_n P_n^{(2)}(q)^p).
\end{align*}

Normalizing by $\E(P_{n+1}(q))^p$ and noting that by Jensen, $\E(P_n(q)^{\frac{p}{2}})\leq \E(P_n(q))^{\frac{p}{2}}$, we see that

\begin{equation}
\E(\Pi_{n+1}(q)^p)\leq \frac{\E(\sin^{2qp}\theta_n+\cos^{2qp}\theta_n)}{(\E(\sin^{2q}\theta_n+\cos^{2q}\theta_n))^p}\E(\Pi_n(q)^p)+\frac{\E(2\sin^{qp}\theta_n\cos^{qp}\theta_n)}{(\E(\sin^{2q}\theta_n+\cos^{2q}\theta_n))^p}.
\end{equation}

Let us now assume that $\E(\Pi_{n+1}(q)^p)\geq \E(\Pi_n(q)^p)$. We then obtain that

\begin{equation}
\frac{n}{b}\left(1-\frac{\E(\sin^{2qp}\theta_n+\cos^{2qp}\theta_n)}{(\E(\sin^{2q}\theta_n+\cos^{2q}\theta_n))^p}\right)\E(\Pi_{n+1}(q)^p)\leq\frac{n}{b}\frac{\E(2\sin^{qp}\theta_n\cos^{qp}\theta_n)}{(\E(\sin^{2q}\theta_n+\cos^{2q}\theta_n))^p}.
\end{equation}
As $n\to \infty$, the term multiplying $\E(\Pi_ {n+1}(q)^p)$ converges to $T(pq)-pT(q)$ which is positive by our assumption. If $q>1$, the right side converges (by Lemma \ref{le:estB}) to

\begin{equation}
\int_0^{\frac{\pi}{4}}\sin^{qp-2}\theta\cos^{qp-2}\theta d\theta.
\end{equation}

\noindent and this integral converges for any $p>1$. If $q<1$, we note $T(pq)-pT(q)>0$ for any $p>1$, in particular choosing $p=2$, we can again apply Lemma \ref{le:estB}, and we get a convergent integral as $q>\frac{1}{2}$. We conclude that if $\E(\Pi_{n+1}(q)^p)\geq \E(\Pi_n(q)^p)$, we get an absolute bound for $\E(\Pi_{n+1}(q)^p)$. On the other hand, if $\E(\Pi_{n+1}(q)^p)\leq \E(\Pi_n(q)^p)$, we keep going back in the sequence until we have an index of increase or we reach $n=1$ - in any event, we get an upper bound. We conclude that $\E(\Pi_n(q)^p)$ is bounded for $p\in[1,2]$.

\vspace{0.3cm}

One can then proceed inductively to prove this if $p\in(k-1,k]$, for $k>2$: one writes $p=k\frac{p}{k}$ and with a similar subadditivity argument one has

\begin{align}
\E(\Pi_{n+1}(q)^p)&\leq\frac{1}{T_n(q)^p}\sum_{j=0}^k{k\choose j}\E\left(\sin^{2qp\frac{j}{k}}\theta_n \cos^{2qp\frac{k-j}{k}}\theta_n\right)\\
& \times\E(\Pi_n(q))^\frac{pj}{k}) \E(\Pi_n(q)^{\frac{p(k-j)}{k}}). \notag
\end{align}

Assuming $\E(\Pi_{n+1}(q)^p)\geq \E(\Pi_n(q)^p)$ gives

\begin{align}
\E(\Pi_{n+1}(q)^p)\left(1-\frac{T_n(qp)}{T_n(q)^p}\right)&\leq \frac{1}{T_n(q)^p}\sum_{j=1}^{k-1}{k\choose j}\E\left(\sin^{2qp\frac{j}{k}}\theta_n \cos^{2qp\frac{k-j}{k}}\theta_n\right)\\
& \times\E(\Pi_n(q))^\frac{pj}{k}) \E(\Pi_n(q)^{\frac{p(k-j)}{k}}).\notag
\end{align}

\vspace{0.3cm}

We now assume that we know that $\sup_n\E(\Pi_n(q)^{p'})<\infty$ for $1<p'\leq k-1$. As $p\leq k$, we can again use Jensen to estimate $\E(\Pi_n(q)^{\frac{pj}{k}})\leq \E(\Pi_n(q)^j)^{\frac{p}{k}}$ which is bounded in $n$ for $1<j<k$ (by our induction hypothesis). We can then proceed as for $p\leq 2$.

\vspace{0.3cm}

Consider now $p>p^*(q)$. Then by superadditivity of $x\mapsto x^p$ for $p>1$, we have

\begin{equation}
\E(\Pi_{n+1}(q)^p)\geq \frac{\E(\sin^{2qp}\theta_n+\cos^{2qp}\theta_n)}{(\E(\sin^{2q}\theta_n+\cos^{2q}\theta_n))^p}\E(\Pi_n(q)^p).
\end{equation}

The prefactor is $1+\frac{b}{n}(pT(q)-T(pq))+\mathcal{O}((\frac{b}{n})^{1+\epsilon})$ for some $\epsilon>0$ so iterating this, we find that $\E(\Pi_n(q)^p)\geq C e^{b\log n(pT(q)-T(pq))}$. By our assumptions, the exponent is positive so we see that this blows up.

\end{proof}

\begin{remark}[Multifractal scaling]
We point out that this result implies the multifractal scaling we discussed earlier (see \eqref{eq:scaling}): for $q<q_c$, $\E(\Pi_n(q)^p)$ is bounded for some $p>1$ and we argued that this implies that asymptotically, $\Pi_n(q)$ is non-trivial. In fact, we expect that it converges, so we'll make the assumption that as $n\to \infty$, $\Pi_n(q)=\mathcal{O}(1)$ (we'll see that for $q<1$, we have convergence so at least in this case, our argument is solid). We thus have 

\begin{align}
\notag P_n(q)&=\Pi_n(q)\E(P_n(q))\\
&=\mathcal{O}(1)\prod_{k=1}^n \left(1-\frac{b}{k}T(q)\right)\\
\notag &=\mathcal{O}(1)n^{-bT(q)}.
\end{align}

In the notation of \eqref{eq:scaling}, this implies 

\begin{equation}
d(q)=\frac{bT(q)}{q-1}=b\frac{\sqrt{\pi}}{8}\frac{\Gamma(q-\frac{1}{2})}{\Gamma(q)}.
\end{equation}

If we fix some $q>1/2$, by taking $b$ small enough, we have $d(q)\in(0,1)$ implying the claimed multifractal scaling for $P_n(q)$. For larger values of $q$, we expect a different normalizing factor, but still one that produces multifractality.
\end{remark}

To see that $q_c$ described here is truly the critical point in that for $q>q_c$, normalizing by the expectation is not the correct way to proceed, we point out the following result:

\begin{proposition}
For $q>q_c$, there exists an $h=h(q)<1$ such that

\begin{equation}
\E(\Pi_n(q)^h)\to 0
\end{equation}

\noindent as $n\to \infty$.
\end{proposition}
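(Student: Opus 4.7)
The plan follows the classical fractional-moment argument from branching random walk theory. First I divide the LME recursion \eqref{LME} through by $\E(P_{n+1}(q)) = T_n(q)\,\E(P_n(q))$ to obtain
\begin{equation*}
\Pi_{n+1}(q) \eqlaw \frac{\sin^{2q}\theta_n}{T_n(q)}\,\Pi_n^{(1)}(q) + \frac{\cos^{2q}\theta_n}{T_n(q)}\,\Pi_n^{(2)}(q),
\end{equation*}
with $\theta_n$, $\Pi_n^{(1)}$, $\Pi_n^{(2)}$ independent and each $\Pi_n^{(i)}(q)\eqlaw \Pi_n(q)$. For any $h\in(0,1)$ the subadditivity $(a+b)^h\le a^h + b^h$ for $a,b\ge 0$, together with independence, then gives after taking expectations
\begin{equation*}
\E(\Pi_{n+1}(q)^h) \le \frac{T_n(qh)}{T_n(q)^h}\,\E(\Pi_n(q)^h),
\end{equation*}
since by definition $\E(\sin^{2qh}\theta_n + \cos^{2qh}\theta_n) = T_n(qh)$.

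The heart of the argument is to choose $h < 1$ so that the multiplier above is eventually strictly below $1$ by a margin of order $1/n$. Provided $qh > \tfrac12$, Lemma \ref{le:estT} together with a first order expansion of $(1-x)^h$ yields
\begin{equation*}
\frac{T_n(qh)}{T_n(q)^h} = 1 - \frac{b}{n}\bigl(T(qh) - hT(q)\bigr) + \mathcal{O}(n^{-\alpha})
\end{equation*}
for some $\alpha > 1$, so I need $g(h) := T(qh) - hT(q) > 0$. One has $g(1) = 0$ and $g'(1) = qT'(q) - T(q) = H(q)$; by the sign analysis of $H$ carried out earlier in the excerpt (in the proof that $q_c$ is the unique zero of $H$ in $(\tfrac12,\infty)$), $H(q) < 0$ for every $q > q_c$. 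Hence $g$ is strictly decreasing at $h = 1$, and $g(h) > 0$ for any $h < 1$ sufficiently close to $1$. The auxiliary condition $qh > \tfrac12$ is automatic when $q > q_c > \tfrac12$ and $h$ is chosen close enough to $1$.

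Fixing such an $h$ and setting $\delta := b\,g(h) > 0$, the bound becomes
\begin{equation*}
\E(\Pi_{n+1}(q)^h) \le \Bigl(1 - \tfrac{\delta}{n} + \mathcal{O}(n^{-\alpha})\Bigr)\,\E(\Pi_n(q)^h),
\end{equation*}
and starting from $\E(\Pi_1(q)^h) = 1$, iterating and taking logarithms gives $\E(\Pi_n(q)^h) \le C\,n^{-\delta}$, which tends to $0$ as $n\to\infty$.

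I do not anticipate a serious obstacle: the subadditivity trick is tailored exactly to the regime where moments of $\Pi_n(q)$ of order $p > 1$ blow up, and the phase boundary $q = q_c$ is captured precisely through the linearization of $g$ at $h = 1$, mirroring the way the freezing temperature is extracted in the cascade setting. The only mildly delicate step is arranging simultaneously $h < 1$, $qh > \tfrac12$ and $g(h) > 0$, but this is immediate from $q > q_c > \tfrac12$ combined with $g'(1) = H(q) < 0$.
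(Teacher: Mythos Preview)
Your proof is correct and follows essentially the same approach as the paper's own proof: both use subadditivity of $x\mapsto x^h$ for $h\in(0,1)$ to obtain the bound $\E(\Pi_{n+1}(q)^h)\le \frac{T_n(qh)}{T_n(q)^h}\E(\Pi_n(q)^h)$, expand the multiplier via Lemma~\ref{le:estT}, choose $h<1$ close to $1$ so that $T(qh)-hT(q)>0$ using $g'(1)=qT'(q)-T(q)<0$ for $q>q_c$, and then iterate to get polynomial decay in $n$. Your write-up is in fact slightly more careful in making explicit the auxiliary constraint $qh>\tfrac12$ needed for Lemma~\ref{le:estT} to yield an error term of order $n^{-\alpha}$ with $\alpha>1$.
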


\begin{proof}
This follows essentially from the subadditivity of $x\mapsto x^h$. We get immediately that

\begin{equation}
\E(\Pi_{n+1}(q)^h)\leq \E(\Pi_n(q)^h)\frac{\E(\sin^{2qh}\theta_n+\cos^{2qh}\theta_n)}{(\E(\sin^{2q}\theta_n+\cos^{2q}\theta_n))^h}.
\end{equation}

The ratio here is $1-\frac{b}{n}(T(qh)-hT(q))+\mathcal{O}((\frac{b}{n})^{1+\epsilon})$ (for some $\epsilon>0$). As $qT'(q)-T(q)<0$, One can check that for $h\in(0,1)$ close enough to one, $T(qh)-hT(q)>0$. Thus

\begin{equation}
\E(\Pi_n(q)^h)\leq C e^{-b\log n(T(qh)-hT(q))}\to 0
\end{equation}

\noindent as $n\to \infty$.
\end{proof}

\section{Convergence of positive integer moments}

In this section we will prove that for each positive integer $k$ and for $q\in (\frac{1}{2},q_k)$, where $q_k$ is the unique point satisfying $T(kq_k)-kT(q_k)=0$,  $\E(\Pi_n(q)^j)$ converges for $j\leq k$.

\begin{proposition}
For any fixed positive value of $b$, $\frac{1}{2}<q<q_k$ and $j\leq k$, $\E(\Pi_n(q)^j)$ converges to a positive number $M_j=M_j(q)$ and for $l\leq k$ one has the recursion

\begin{equation}\label{eq:momrec}
M_l=\frac{1}{T(lq)-lT(q)}\sum_{j=1}^{l-1}{l\choose j} M_j M_{l-j}\frac{1}{2}\int_0^{\frac{\pi}{4}}\sin^{2qj-2}\theta\cos^{2q(l-j)-2}\theta d\theta
\end{equation}

\noindent with $M_1(q)=1$.

\end{proposition}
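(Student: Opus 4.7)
The plan is an induction on $l$, with base case $l=1$ being trivial since $\E(\Pi_n(q)) = 1$ for all $n$. For the inductive step, suppose the conclusion holds for all $j < l$, and fix $q \in (1/2, q_k)$ with $l \leq k$. Expanding the $l$-th power of the LME recursion \eqref{LME} via the binomial theorem and taking expectations, using independence of $\theta_n$, $P_n^{(1)}$ and $P_n^{(2)}$, gives
\begin{align*}
\E(P_{n+1}(q)^l) = \sum_{j=0}^l \binom{l}{j} \E\bigl(\sin^{2qj}\theta_n \cos^{2q(l-j)}\theta_n\bigr) \E(P_n(q)^j) \E(P_n(q)^{l-j}).
\end{align*}
Dividing by $\E(P_{n+1}(q))^l = T_n(q)^l \E(P_n(q))^l$ and isolating the endpoint terms $j=0$ and $j=l$ (which together contribute $T_n(lq) m_n^{(l)}$) yields
\begin{align*}
m_{n+1}^{(l)} = \frac{T_n(lq)}{T_n(q)^l} m_n^{(l)} + \frac{1}{T_n(q)^l} \sum_{j=1}^{l-1} \binom{l}{j} \E\bigl(\sin^{2qj}\theta_n \cos^{2q(l-j)}\theta_n\bigr) m_n^{(j)} m_n^{(l-j)},
\end{align*}
where $m_n^{(j)} := \E(\Pi_n(q)^j)$.

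Next, I would apply Lemma \ref{le:estT} and Lemma \ref{le:estB} to extract the leading asymptotics. Since $q > 1/2$ and $j, l-j \geq 1$, the error exponents $\min(2, 2q)$ and $\min(2qj, 2q(l-j))$ all strictly exceed $1$, so the recursion takes the form
\begin{align*}
m_{n+1}^{(l)} = \Bigl(1 - \frac{c}{n}\Bigr) m_n^{(l)} + \frac{b}{n} A_n + O(n^{-1-\epsilon})
\end{align*}
for some $\epsilon > 0$, with $c = b(T(lq) - lT(q))$, $A_n = \sum_{j=1}^{l-1}\binom{l}{j} B_{j,l-j}(q) m_n^{(j)} m_n^{(l-j)}$, and $B_{j,l-j}(q) = \tfrac12 \int_0^{\pi/4} \sin^{2qj-2}\theta \cos^{2q(l-j)-2}\theta \, d\theta$. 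The $O(n^{-1-\epsilon})$ remainder is uniform in $n$ because $\sup_n m_n^{(l)} < \infty$: since $l \leq k$ and $p^*$ is decreasing with $p^*(q_l) = l$, the hypothesis $q < q_k \leq q_l$ forces $l < p^*(q)$, so the moment-boundedness Proposition of the previous section applies. The same monotonicity shows $c > 0$, and the inductive hypothesis gives $A_n \to A := \sum_{j=1}^{l-1}\binom{l}{j} B_{j,l-j}(q) M_j M_{l-j}$.

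The final ingredient is the elementary fact: if $x_{n+1} = (1 - c/n) x_n + d_n/n + E_n$ with $c > 0$, $d_n \to d$, and $\sum_n |E_n| < \infty$, then $x_n \to d/c$. One proves this by variation of constants: set $\phi_n = \prod_{k=n_0}^{n-1}(1 - c/k) \sim K n^{-c}$ and $x_n = \phi_n y_n$, so that $y_{n+1} - y_n = (d_n/n + E_n)/\phi_{n+1}$, whose partial sums behave like $(d/(cK)) n^c$, yielding $x_n \to d/c$. Applied to our recursion this gives $m_n^{(l)} \to b A / c = A/(T(lq) - lT(q))$, matching \eqref{eq:momrec}. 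Positivity of $M_l$ then follows inductively, since all $M_j$ and the integrals $B_{j,l-j}$ are positive.

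The main obstacle is bookkeeping: one must track that every error term from Lemmas \ref{le:estT} and \ref{le:estB} is genuinely summable rather than merely $o(1/n)$, which is precisely what the strict inequality $q > 1/2$ buys in each appearance. A secondary point that needs care is the comparison $q_l \geq q_k$ for $l \leq k$, which relies on the monotonicity of $p^*$ (itself a consequence of $T'' < 0$) established earlier.
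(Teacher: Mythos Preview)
Your argument is correct and follows essentially the same route as the paper: both derive the recursion for $m_n^{(l)}$ from the binomial expansion, invoke Lemmas~\ref{le:estT} and~\ref{le:estB} together with the boundedness of $m_n^{(l)}$ from the previous section, and then analyze the resulting one-step linear recursion $x_{n+1}=(1-c/n)x_n+\dots$ via the product $\prod_j(1-c/j)$. The only cosmetic difference is that the paper first subtracts the candidate limit $M_l$ and shows the error $\epsilon_n^{(l)}$ tends to zero by writing out the explicit solution formula, whereas you package the same computation as an ``elementary fact'' about such recursions; your variation-of-constants sketch is precisely the paper's explicit sum $\epsilon_n=\sum_k\delta_k\prod_{j>k}(1-c/j)$.
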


\begin{proof}
Let us begin with the $k=2$ case. Normalizing the recursion $\eqref{lemrec}$ by expectations, we find that the second moment satisfies the recursion

\begin{align*}
\E(\Pi_{n+1}(q)^2)&=\E(\Pi_n(q)^2)+\left(\frac{T_n(2q)}{T_n(q)^2}-1\right)\E(\Pi_n(q)^2)\\
&+\frac{2}{T_n(q)^2}\E(\sin^{2q}\theta_n\cos^{2q}\theta_n).
\end{align*}

By Lemma \ref{le:estT} and Lemma \ref{le:estB}, we can write (for some $\alpha>1$)

\begin{align}
\frac{n}{b}(\E(\Pi_{n+1}(q)^2-\E(\Pi_n(q)^2))&=\int_0^{\frac{\pi}{4}}\sin^{2q-2}\theta\cos^{2q-2}\theta d\theta\\
&-(T(2q)-2T(q))\E(\Pi_n(q)^2)+\mathcal{O}\left(\left(\frac{b}{n}\right)^{\alpha-1}\right).\notag
\end{align}

From this we can see what the limit of $\E(\Pi_n(q)^2)$ must be if it exists. If $\E(\Pi_n(q)^2)$ were to converge, the left side can only converge to zero - if it had a non-zero limit, $\E(\Pi_n(q)^2)=\sum_{k=2}^{n}(\E(\Pi_k(q)^2)-\E(\Pi_{k-1}(q)^2))+\E(\Pi_1(q)^2)$ would diverge (logarithmically) as $n\to\infty$. Thus if a limit exists it must equal $M_2(q)=\frac{1}{T(2q)-2T(q)}\int_0^{\frac{\pi}{4}}\sin^{2q-2}\theta\cos^{2q-2}\theta d\theta$ (recall the notation of \eqref{eq:momrec}).

\vspace{0.3cm}

Let us thus write

\begin{equation}
\E(\Pi_n(q)^2)=M_2(q)-\epsilon_n^{(2)}.
\end{equation}

A priori, we only know that $\epsilon_n^{(2)}$ is a bounded sequence (as we showed in the previous section that $\E(\Pi_n(q)^2)$ is bounded for $\frac{1}{2}<q<q_2$), but let us try to show that $\epsilon_n^{(2)}\to 0$. Plugging this ansatz into the recursion, one finds
\begin{equation}
\epsilon_{n+1}^{(2)}=\epsilon_n^{(2)}\left(1-\frac{b}{n}(T(2q)-2T(q))\right)+\delta_n\notag,
\end{equation}

\noindent where $\delta_n=\mathcal{O}((\frac{b}{n})^{\alpha})$. We can of course write the solution to this recursion "explicitly": if we define $\delta_0=\epsilon_1^{(2)}$ and interpret the empty product ($k=n-1$) as $1$, then

\begin{equation}
\epsilon_n^{(2)}=\sum_{k=0}^{n-1}\delta_k\prod_{j=k+1}^{n-1}\left(1-\frac{b}{j}(T(2q)-2T(q))\right).
\end{equation}

The product we can estimate by

\begin{equation}
\left|\prod_{j=k+1}^{n-1}\left(1-\frac{b}{j}(T(2q)-2T(q))\right)\right|\leq C \left(\frac{k}{n}\right)^{b(T(2q)-2T(q))},
\end{equation}

\noindent where $C$ is independent of $k$ and $n$. Note that if $b$ is large (but independent of $n$), the terms in the product may be negative for small $j$, but this still is a valid asymptotic estimate for large $n$. Recalling that $|\delta_k|\leq Ck^{-\alpha}$, for some $\alpha>1$, we see that

\begin{align}
|\epsilon_n^{(2)}|&\leq Cn^{-\alpha}+Cn^{-b(T(2q)-2T(q))}\sum_{k=1}^{n-2} k^{-\alpha+b(T(2q)-2T(q))}\\
\notag &\leq \begin{cases}
Cn^{-b(T(2q)-2T(q))}, & -\alpha+b(T(2q)-2T(q))<-1\\
Cn^{-b(T(2q)-2T(q))}\log n, & -\alpha+b(T(2q)-2T(q))=-1\\
Cn^{1-\alpha}, & -\alpha+b(T(2q)-2T(q))>-1
\end{cases}.
\end{align}

 Thus we see that for any fixed value of $b$, $\lim_{n\to\infty}\epsilon_n^{(2)}=0$ implying that $\E(\Pi_n(q)^2)\to M_2(q)$.

\vspace{0.3cm}

Let us now try to do something similar for $\E(\Pi_n(q)^k)$. From the recursion for $\Pi_n(q)$, one finds

\begin{align*}
&\frac{n}{b}\left(\E(\Pi_{n+1}(q)^k)-\E(\Pi_n(q)^k)\right)\\
&=-(T(qk)-kT(q)) \E(\Pi_n(q)^k)+\sum_{j=1}^{k-1} {k\choose j}\E(\Pi_n(q)^j)\E(\Pi_n(q)^{k-j})\\
& \times\frac{1}{2}\int_0^{\frac{\pi}{4}}\frac{d\theta}{\sin^2\theta\cos^2\theta}\sin^{2qj}\theta\cos^{2q(k-j)}\theta+\mathcal{O}\left(\left(\frac{b}{n}\right)^{\alpha-1}\right).
\end{align*}

With a similar argument as in the $k=2$ case, we see that if $\E(\Pi_n(q)^j)$ were to converge for $j\leq k$, then the only possibility is to $M_j(q)$ which satisfies the recursion

\begin{align}
M_k(q)=&\frac{1}{T(qk)-kT(q)}\sum_{j=1}^{k-1} {k\choose j}M_j(q)M_{k-j}(q)\\
& \times\frac{1}{2}\int_0^{\frac{\pi}{4}}\frac{d\theta}{\sin^2\theta\cos^2\theta}\sin^{2qj}\theta\cos^{2q(k-j)}\theta\notag
\end{align}

\noindent with $M_1(q)=1$.

\vspace{0.3cm}

Based on our experience in the $k=2$ case, let us write

\begin{equation}
\E(\Pi_n(q)^j)=M_j(q)-\epsilon_n^{(j)}
\end{equation}

\noindent and proceed inductively: let us assume that for $j<k$, $|\epsilon_m^{(j)}|\leq C_j m^{-\alpha_j}$ for some $\alpha_j>0$ independent of $m$. Plugging this into the recursion for $\E(\Pi_n(q)^k)$ (and making use of Lemma \ref{le:estT} and Lemma \ref{le:estB}), we find

\begin{equation}
\epsilon_{n+1}^{(k)}=\epsilon_n^{(k)}\left(1-\frac{b}{n}(T(qk)-kT(q))\right)+\mathcal{O}(n^{-\alpha})
\end{equation}

\noindent for some $\alpha>1$. Arguing as for $k=2$, we see that this implies that $|\epsilon_n^{(k)}|\leq C n^{-\gamma_k}$ for some $\gamma_k>0$. We conclude that $\E(\Pi_n(q)^k)\to M_k(q)$.

\end{proof}

\section{Convergence of $\Pi_n(q)$ for $\frac{1}{2}<q<1$}

One can check that for $\frac{1}{2}<q<1$, $T(pq)-pT(q)>0$ for any $p>0$ so by the previous section, $\E(\Pi_n(q)^k)$ converges to $M_k(q)$ for all positive integers $k$. The natural question to ask then is does this imply that $\Pi_n(q)$ converges in law. This is the case if the moments $M_k(q)$ uniquely determine a probability measure. This in turn can be checked by Carleman's condition, or the following slightly weaker condition (for details, see e.g. Theorem 3.3.12 and the discussion around it in \cite{dur})

\begin{proposition}\label{prop:momprob}
Suppose the moments of the random variables $X_n$ satisfy $\E(X_n^k)\to \mu_k$ as $n\to\infty$ for each $k$ and
\begin{equation}
\limsup_{k\to\infty}\frac{1}{2k}\mu_{2k}^{\frac{1}{2k}}<\infty.
\end{equation}
Then $X_n$ converges in law to the unique distribution with moments $\mu_k$.
\end{proposition}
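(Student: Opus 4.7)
The plan is to combine three standard ingredients: tightness of the sequence $(X_n)$, identification of all subsequential weak limits via the method of moments, and Carleman's uniqueness criterion for the Hamburger moment problem.

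The first step is to establish tightness. Since $\E(X_n^2)\to\mu_2$, the quantity $\sup_n\E(X_n^2)$ is finite, and Chebyshev gives $\Prob(|X_n|>M)\leq M^{-2}\E(X_n^2)$, so $\sup_n\Prob(|X_n|>M)\to 0$ as $M\to\infty$. By Prokhorov's theorem (or Helly's selection principle), every subsequence of $(X_n)$ contains a further subsequence converging in distribution to some random variable $X$.

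The second step is to identify the moments of any such subsequential limit $X$. Because $\sup_n\E(X_n^{k+1})<\infty$ (a convergent sequence is bounded), the family $\{X_n^k\}_n$ is uniformly integrable; combined with weak convergence along the chosen subsequence, this yields $\E(X^k)=\lim\E(X_n^k)=\mu_k$ for every positive integer $k$. Hence every subsequential limit has the same moment sequence $\{\mu_k\}$.

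The third and key step is to verify Carleman's condition $\sum_k\mu_{2k}^{-1/(2k)}=\infty$, which guarantees that the moment sequence $\{\mu_k\}$ uniquely determines a probability measure $\mu$. The hypothesis supplies a constant $C$ with $\mu_{2k}^{1/(2k)}\leq Ck$ for all sufficiently large $k$, so $\mu_{2k}^{-1/(2k)}\geq 1/(Ck)$ and the series diverges by comparison with the harmonic series. Uniqueness in the moment problem then forces every subsequential weak limit of $(X_n)$ to have distribution $\mu$, and the standard subsequence principle (a sequence in a Polish space converges iff every subsequence has a further subsequence with the same limit) upgrades this to $X_n\Rightarrow\mu$. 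The only genuinely non-trivial input is Carleman's theorem itself, which one would invoke as a black box; everything else is a routine assembly of tightness, uniform integrability, and the subsequence criterion, which is presumably why the authors refer to Durrett rather than writing out a full proof.
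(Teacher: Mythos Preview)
Your proof is correct and is precisely the standard textbook argument; the paper itself does not prove this proposition but simply cites Durrett (Theorem 3.3.12 and surrounding discussion), so you have supplied what the authors chose to omit. One cosmetic point: in the uniform integrability step you invoke $\sup_n\E(X_n^{k+1})<\infty$, but for even $k$ the exponent $k+1$ is odd and does not directly dominate $|X_n|^k$; the clean fix is to bound by the next \emph{even} moment, i.e.\ use $\sup_n\E(X_n^{2\lceil (k+1)/2\rceil})<\infty$, which is finite for the same reason. In the paper's application the variables $\Pi_n(q)$ are nonnegative anyway, so the issue never arises.
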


To check this property for the quantities $M_k(q)$, we prove the following proposition:

\begin{proposition}
There exists some $C(q)>0$ (independent of $k$) such that for $\frac{1}{2}<q<1$,

\begin{equation}
M_k(q)\leq (C(q))^k k!.
\end{equation}

\end{proposition}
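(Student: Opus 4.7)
My plan is strong induction on $k$ with the ansatz $M_k \le D\,C^k k!$, where $D \le 1$ and $C$ are to be chosen. Substituting into the recursion (\ref{eq:momrec}) and using $\binom{k}{j}j!(k-j)! = k!$, the inductive step reduces to
\[
M_k \le D^2 C^k k! \cdot R_k(q), \qquad R_k(q) := \frac{\sum_{j=1}^{k-1} B_{j,k-j}}{T(kq) - kT(q)},
\]
where $B_{j,k-j} := \tfrac12 \int_0^{\pi/4}\sin^{2qj-2}\theta\cos^{2q(k-j)-2}\theta\,d\theta$ is the integral appearing on the right-hand side of (\ref{eq:momrec}). The induction therefore closes as soon as $D\, R_k(q) \le 1$ for every $k \ge 2$. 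Once we know $R^*(q) := \max\!\bigl(1,\sup_{k\ge 2} R_k(q)\bigr)$ is finite, the choice $D = 1/R^*(q)$, $C(q) = R^*(q)$ simultaneously satisfies this constraint and the base case $M_1 = 1 \le DC$, yielding the desired bound $M_k \le C(q)^k k!$.

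Thus the entire proof reduces to showing $\sup_{k\ge 2} R_k(q) < \infty$. For the denominator, I would use that $T(q) = \tfrac{\sqrt\pi}{8}\Gamma(q-\tfrac12)/\Gamma(q-1)$ is strictly negative on $(1/2, 1)$ (the numerator is positive while $\Gamma(q-1)$ is negative on $(-1/2, 0)$), whereas $T(kq) \ge 0$ whenever $kq \ge 1$, which holds for every $k \ge 2$ and $q > 1/2$. Hence
\[
T(kq) - kT(q) \;\ge\; k\,|T(q)|,
\]
so the denominator grows at least linearly in $k$.

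For the numerator I would extend each integration range from $[0,\pi/4]$ to $[0,\pi/2]$ and identify the resulting expression as a Beta function,
\[
B_{j,k-j} \;\le\; \tfrac14\, B\!\bigl(qj-\tfrac12,\, q(k-j)-\tfrac12\bigr) \;=\; \frac{\Gamma(qj - 1/2)\,\Gamma(q(k-j) - 1/2)}{4\,\Gamma(qk - 1)}.
\]
Log-convexity of $\log\Gamma$ (its second derivative, the trigamma function, is positive) implies that on the admissible range $j \in \{1,\dots,k-1\}$ this product is maximised at the endpoints $j = 1$ and $j = k-1$. Stirling then yields $\Gamma(q(k-1) - 1/2)/\Gamma(qk - 1) = O(k^{1/2 - q})$, and crudely bounding the sum by $(k-1)$ times the maximal term gives $\sum_{j=1}^{k-1} B_{j,k-j} = O(k^{3/2 - q})$. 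Since $q > 1/2$, combining with the linear lower bound on the denominator gives $R_k(q) = O(k^{1/2 - q}) \to 0$, so $\sup_k R_k(q) < \infty$, completing the proof. The main subtlety is that $B_{j,k-j}$ is \emph{not} symmetric under $j \leftrightarrow k-j$ on $[0,\pi/4]$ (because $\sin \ne \cos$ there), so the symmetrisation by extension to $[0,\pi/2]$ is what allows the clean Gamma-function manipulations.
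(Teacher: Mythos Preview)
Your argument is correct. The overall architecture---strong induction via the ansatz $M_j\le \text{const}\cdot C^j j!$, reducing the step to showing that a certain ratio is bounded in $k$---is the same as the paper's. The differences are in how the numerator and denominator are estimated.

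For the denominator you use the clean inequality $T(kq)-kT(q)\ge k|T(q)|$ (from $T(kq)\ge 0$ for $kq\ge 1$ and $T(q)<0$ for $q\in(\tfrac12,1)$), valid for all $k\ge 2$. The paper instead invokes the Stirling asymptotic $T(kq)\sim\sqrt{k}$ to conclude $k/(T(kq)-kT(q))\to -1/T(q)$, which gives only eventual boundedness and forces a separate treatment of small $k$ via a threshold $k_0$. Your inequality is sharper and avoids this split.

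For the numerator the paper stays on $[0,\pi/4]$ and bounds the sum inside the integral by $\sum_{j=1}^{k-1}\tan^{2qj-2}\theta\le k\tan^{2q-2}\theta$ (using $\tan\theta\le 1$ there and $2q-2<0$), then appeals to dominated convergence to see that $\int_0^{\pi/4}\cos^{2qk-4}\theta\,\tan^{2q-2}\theta\,d\theta\to 0$. You instead extend to $[0,\pi/2]$, identify Beta functions, use log-convexity of $\log\Gamma$ to pick out the endpoint $j=1$, and apply Stirling to get the explicit rate $R_k(q)=O(k^{1/2-q})$. This is a little heavier machinery than the paper's route but yields a quantitative decay rate, and your remark about the symmetrisation being needed (because $B_{j,k-j}$ is not symmetric on $[0,\pi/4]$) is exactly the point.

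One minor organisational difference: by carrying the auxiliary constant $D\le 1$ and setting $D=1/R^*(q)$, $C=R^*(q)$, you close the induction uniformly for all $k\ge 2$ in one stroke, whereas the paper defines $C(q)=\max_{j<k_0}(M_j/j!)^{1/j}$ to handle the finitely many small $k$ separately.
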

\begin{proof}
Let us make the ansatz $M_j(q)\leq C^j j!$ for $j<k$. We then have by the recursion for $M_k$

\begin{align*}
M_k(q)&\leq \frac{C^k k!}{T(qk)-kT(q)}\frac{1}{2}\int_0^{\frac{\pi}{4}}\cos^{2qk-4}\theta\sum_{j=1}^{k-1}\tan^{2qj-2}\theta d\theta\\
&\leq C^k k!\frac{k}{T(qk)-kT(q)}\frac{1}{2}\int_0^{\frac{\pi}{4}}\cos^{2qk-4}\theta \tan^{2q-2}\theta d\theta.
\end{align*}

Recalling that

\begin{equation}
T(q)=\frac{\sqrt{\pi}}{2}\frac{\Gamma(q-\frac{1}{2})}{\Gamma(q-1)},
\end{equation}

\noindent we see (e.g. by Stirling's approximation) that $T(qk)\sim \sqrt{k}$ for large $k$ so that $k/(T(qk)-kT(q))\sim -1/T(q)>0$ for $q\in(1/2,1)$. Moreover, by say the monotone or dominated convergence theorem, the integral tends to zero as $k\to\infty$. Thus there exists a $k_0$ such that for $k\geq k_0$,

\begin{equation}
\frac{k}{T(qk)-kT(q)}\frac{1}{2}\int_0^{\frac{\pi}{4}}\cos^{2qk-4}\theta \tan^{2q-2}\theta d\theta<1.
\end{equation}

Also let us define

\begin{equation}
C(q)=\max_{j<k_0}\left(\frac{M_j(q)}{j!}\right)^{\frac{1}{j}}
\end{equation}

\noindent so that $M_j(q)\leq C(q)^j j!$ for $j<k_0$. Our argument and choice of $k_0$ then implies that $M_k(q)\leq C(q)^k k!$ for all $k$.
\end{proof}

Making use of Stirling's approximation, it is easy to check that this estimate is enough for us to be able to apply Proposition \ref{prop:momprob} and we get:

\begin{theorem}
For $\frac{1}{2}<q<1$, and any $b>0$, $\Pi_n(q)$ converges in law to the unique non-negative random variable whose positive integer moments are given by $(M_k(q))_k$.
\end{theorem}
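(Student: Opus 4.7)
The plan is to invoke Proposition \ref{prop:momprob} directly, since all the essential ingredients have been assembled in the preceding sections. Concretely, for $\frac{1}{2}<q<1$ one has $T(pq)-pT(q)>0$ for every $p>1$ (so in particular $q<q_k$ for every $k\in\N$), hence by the previous section $\E(\Pi_n(q)^k)\to M_k(q)$ for every positive integer $k$. The previous proposition gives the bound $M_k(q)\leq C(q)^k k!$ for some constant $C(q)>0$ depending only on $q$, so it remains only to verify the moment growth condition of Proposition \ref{prop:momprob}.

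First I would apply the bound with $k$ replaced by $2k$ to obtain
\begin{equation*}
M_{2k}(q)^{1/(2k)}\leq C(q)\bigl((2k)!\bigr)^{1/(2k)}.
\end{equation*}
By Stirling's approximation $((2k)!)^{1/(2k)}=\frac{2k}{e}(1+o(1))$ as $k\to\infty$, so
\begin{equation*}
\limsup_{k\to\infty}\frac{1}{2k}M_{2k}(q)^{1/(2k)}\leq \frac{C(q)}{e}<\infty,
\end{equation*}
which is precisely the hypothesis of Proposition \ref{prop:momprob}. Applying that proposition with $X_n=\Pi_n(q)$ and $\mu_k=M_k(q)$ yields convergence in law of $\Pi_n(q)$ to the unique probability measure on $\R$ with moments $(M_k(q))_k$. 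Since $\Pi_n(q)\geq 0$ almost surely and the weak limit of non-negative random variables is non-negative, the limit is supported on $[0,\infty)$.

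There is essentially no obstacle left in this final step; all the work has been done in the previous two propositions (convergence of moments and factorial-type growth of the limiting moments). The only point worth double-checking is that the factorial bound really is enough: Carleman's condition $\sum_k M_{2k}^{-1/(2k)}=\infty$ would also follow, since $M_{2k}^{-1/(2k)}\gtrsim 1/(C(q)k)$ and $\sum_k 1/k$ diverges, but the slightly weaker form stated in Proposition \ref{prop:momprob} is already comfortably satisfied. Thus the theorem follows immediately.
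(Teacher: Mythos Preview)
Your proof is correct and takes essentially the same approach as the paper, which simply states that by Stirling's approximation the bound $M_k(q)\leq C(q)^k k!$ suffices to apply Proposition~\ref{prop:momprob}. You have merely written out the details of this verification, along with the observation that the limit is non-negative.
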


\begin{remark} As the moments $M_k$ are independent of $b$ and the random variable whose moments these are is uniquely determined by them, the random variable is independent of $b$, so asymptotically, the only $b$ dependence in $P_n(q)$ is deterministic.
\end{remark}

\section{The stationary equation for $q\in\left(\frac{1}{2},1\right)$}

In addition to describing the positive integer moments, one can also hope to describe for example the Laplace transform of the limiting law. Let us thus define $\phi_n(t)=\E(e^{-t\Pi_n(q)})$ and let $\phi(t)$ be the Laplace transform of the law of the limit (i.e. $\phi(t)=\lim_{n\to\infty} \phi_n(t)$). The main goal of this section is to prove the following proposition

\begin{proposition}
For $\frac{1}{2}<q<1$, $\phi(t)$ is the unique solution to the equation

\begin{equation}\label{eq:stat}
T(q)t\phi'(t)+\frac{1}{2}\int_0^{\frac{\pi}{4}}\frac{d\theta}{\sin^2\theta\cos^2\theta}\left(\phi(t\sin^{2q}\theta )\phi(t\cos^{2q}\theta)-\phi(t)\right)=0
\end{equation}

\noindent in the space of Laplace transforms of laws of non-negative random variables with unit mean.

\end{proposition}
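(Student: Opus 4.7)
The plan is in two stages: derive the stationary equation as a formal limit of the recursion for $\phi_n(t)=\E e^{-t\Pi_n(q)}$, then argue uniqueness via the moment method. Writing $T_n(q)=\E(\sin^{2q}\theta_n+\cos^{2q}\theta_n)$, the LME recursion becomes
\[\phi_{n+1}(t)=\E\bigl[\phi_n(t\sin^{2q}\theta_n/T_n(q))\,\phi_n(t\cos^{2q}\theta_n/T_n(q))\bigr].\]
Decompose the increment $\phi_{n+1}(t)-\phi_n(t)$ into an ``unnormalized'' contribution $\E[\phi_n(t\sin^{2q}\theta_n)\phi_n(t\cos^{2q}\theta_n)-\phi_n(t)]$ plus a ``normalization correction''. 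For the first piece, the integrand $\phi_n(t\sin^{2q}\theta)\phi_n(t\cos^{2q}\theta)-\phi_n(t)$ vanishes like $|\theta|^{\min(2q,2)}$ near $\theta=0$ (since $\phi_n(0)=1$ and $|\phi_n'|\le 1$), so Lemma~\ref{sing} gives $(b/n)\cdot\tfrac{1}{2}\int_0^{\pi/4}[\phi_n(t\sin^{2q}\theta)\phi_n(t\cos^{2q}\theta)-\phi_n(t)]\,d\theta/(\sin^2\theta\cos^2\theta)+O((b/n)^\alpha)$. For the second, Lemma~\ref{le:estT} yields $1/T_n(q)-1=(b/n)T(q)+O((b/n)^\alpha)$, so Taylor-expanding $\phi_n$ (with $|\phi_n''|$ uniformly bounded for $1/2<q<1$ since $\sup_n\E\Pi_n(q)^2<\infty$) and using that $\theta_n\to 0$ in law so that the resulting first-order factor concentrates at $t\phi_n'(t)$, one obtains $(b/n)T(q)\,t\phi_n'(t)+O((b/n)^{1+\varepsilon})$. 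With $\mathcal L[\psi](t)$ denoting the left-hand side of~\eqref{eq:stat}, this gives
\[\phi_{n+1}(t)-\phi_n(t)=\frac{b}{n}\,\mathcal L[\phi_n](t)+O\bigl((b/n)^{1+\varepsilon}\bigr).\]

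For existence, the convergence $\Pi_n(q)\to\Pi(q)$ in law (from the previous section) yields $\phi_n(t)\to\phi(t)$ pointwise, and the uniform second-moment bound gives $\phi_n'(t)\to\phi'(t)$ for $t>0$; dominated convergence (the integrand is bounded by $C_t|\theta|^{\min(2q,2)}$ near $0$, integrable against $d\theta/(\sin^2\theta\cos^2\theta)$) then yields $\mathcal L[\phi_n](t)\to\mathcal L[\phi](t)$ for every $t>0$. Since $\sum_n(\phi_{n+1}(t)-\phi_n(t))=\phi(t)-\phi_1(t)$ converges and the error terms are summable, so must be $\sum_n(b/n)\mathcal L[\phi_n](t)$; but $\sum 1/n=\infty$ then forces the limit $\mathcal L[\phi](t)=0$. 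Hence $\phi$ solves~\eqref{eq:stat}.

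For uniqueness, let $\psi$ be any Laplace transform of a probability measure $\mu$ on $[0,\infty)$ of unit mean satisfying $\mathcal L[\psi]=0$. The plan is to show inductively that all moments $m_k=(-1)^k\psi^{(k)}(0^+)$ are finite and satisfy the recursion~\eqref{eq:momrec}: differentiating~\eqref{eq:stat} $k-1$ times on $(0,\infty)$ (valid by the smoothness of Laplace transforms away from the origin) and sending $t\to 0^+$, the Taylor expansion of $\psi(t\sin^{2q}\theta)\psi(t\cos^{2q}\theta)-\psi(t)$ to order $t^k$ reproduces exactly the recursion satisfied by $M_k(q)$, giving $m_k=M_k(q)$. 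Combined with the bound $M_k(q)\le C(q)^k k!$ from the previous section and Proposition~\ref{prop:momprob}, this forces $\psi=\phi$. The most delicate step is precisely this inductive bootstrap: a priori only $m_1$ is known finite, and differentiating the equation only controls $t\psi^{(k)}(t)$ as $t\to 0^+$, so additional regularity --- for instance exploiting the complete monotonicity of $\psi$ to deduce monotone convergence of $\psi^{(k)}(t)$ as $t\downarrow 0$, or iterating the integral equation to obtain an a priori moment bound --- appears necessary. Should this prove too delicate, an alternative is a perturbative stability argument: initialize the LME dynamics at $\phi_N:=\psi$ for large $N$, use $\mathcal L[\psi]=0$ together with local Lipschitz continuity of $\mathcal L$ near $\psi$ to show $\sup_k\|\phi_{N+k}-\psi\|$ is $O(N^{-\varepsilon})$, and then let first $k\to\infty$ and then $N\to\infty$ to identify $\psi$ with $\phi$.
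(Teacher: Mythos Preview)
Your existence argument is essentially the paper's: the same decomposition of $\phi_{n+1}-\phi_n$ into an ``unnormalized'' piece handled by Lemma~\ref{sing} and a ``normalization'' piece handled by Lemma~\ref{le:estT} plus a Taylor expansion, followed by the observation that $\sum_n \tfrac{b}{n}\mathcal{L}[\phi_n](t)$ can only converge if $\mathcal{L}[\phi](t)=0$. The paper phrases this last step as ``the left side can only converge to zero as otherwise $\phi_n$ would diverge logarithmically'', but it is the same point.

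For uniqueness you also follow the paper's route---show that any solution has moments satisfying the recursion~\eqref{eq:momrec}, then invoke the moment bound $M_k(q)\le C(q)^k k!$ and Proposition~\ref{prop:momprob}---but you are more careful than the paper in flagging the real difficulty: a generic Laplace transform of a unit-mean law need not have higher moments, and differentiating~\eqref{eq:stat} and sending $t\to 0^+$ only yields $t\psi^{(k)}(t)\to 0$, not finiteness of $\psi^{(k)}(0^+)$. The paper simply asserts that the equation ``implies that the positive integer moments satisfy the recursion'' without addressing this point. So you have correctly identified a gap that the paper glosses over; however, you do not close it either. Your suggested fixes (a monotonicity/bootstrap argument exploiting complete monotonicity, or a perturbative stability argument running the dynamics from $\psi$) are plausible but remain at the level of strategy. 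As written, your uniqueness argument is a sketch with an explicit caveat rather than a complete proof---which, to be fair, is also the status of the paper's one-line uniqueness claim.
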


\begin{remark}
Note that our bound on the growth rate of the moments implies that $\phi$ is analytic in some domain around the origin and this integral equation can be obtained from the recursion for the moments in this region. Nonetheless, we shall provide a proof valid for all $t>0$.
\end{remark}

\begin{proof}
Uniqueness follows from the fact that this integral equation implies that the positive integer moments satisfy the recursion of the previous sections and we saw that these moments determine the random variable uniquely.

\vspace{0.3cm}

Let us write the recursion in terms of $\phi_n$ in the following way:

\begin{equation}
\phi_{n+1}(t)-\phi_n(t)=\E\left(\phi_n\left(t\frac{\sin^{2q}\theta_n}{T_n(q)}\right)\phi_n\left(t\frac{\cos^{2q}\theta_n}{T_n(q)}\right)-\phi_n(t)\right).
\end{equation}

Let us expand the integrand here:

\begin{align}
\notag \phi_n\left(t\frac{\sin^{2q}\theta}{T_n(q)}\right)&\phi_n\left(t\frac{\cos^{2q}\theta}{T_n(q)}\right)-\phi_n(t)
\\
\notag &=\phi_n(t\sin^{2q}\theta)\phi_n(t\cos^{2q}\theta)-\phi_n(t)\\
&+\phi_n\left(t\frac{\sin^{2q}\theta}{T_n(q)}\right) \phi_n'(t_{n,c}) t\cos^{2q}\theta\left(\frac{1}{T_n(q)}-1\right)\\
&\notag +\phi_n(t\cos^{2q}\theta)\phi_n'(t_{n,s})t \sin^{2q}\theta\left(\frac{1}{T_n(q)}-1\right)\notag,
\end{align}

\noindent for some $t_{n,c}$ between the points $t\cos^{2q}\theta$ and $t\frac{\cos^{2q}\theta}{T_n(q)}$ and $t_{n,s}$ between the points $t\sin^{2q}\theta$ and $t\frac{\sin^{2q}\theta}{T_n(q)}$.

\vspace{0.3cm}

We then note that the proof of Lemma \ref{sing} can be applied to functions depending on $n$ as well as long as we have uniform bounds on the function as well as the uniform estimate $f(\theta)=\mathcal{O}(|\theta|^\alpha)$ as $\theta\to 0$. Now it is easy to check that on $[-\pi/4,\pi/4]$ one has

\begin{equation}
|\phi_n(t\cos^{2q}\theta)\phi_n(t\sin^{2q}\theta)-\phi_n(t)|\leq t\sin^{2q}\theta+t(1-\cos^{2q}\theta)
\end{equation}

\noindent so an application of Lemma \ref{sing} gives

\begin{align}
\notag \E&(\phi_n(t\cos^{2q}\theta_n)\phi_n(t\sin^{2q}\theta_n)-\phi_n(t))\\
&=2\frac{b}{n}\int_0^{\frac{\pi}{4}}\frac{d\theta}{\sin^2 2\theta}\left(\phi_n(t\cos^{2q}\theta)\phi_n(t\sin^{2q}\theta)-\phi_n(t)\right) +\mathcal{O}\left(\left(\frac{b}{n}\right)^{\alpha}\right)
\end{align}

\noindent for a suitable $\alpha>1$. Note that by the uniform bounds on the integrand, the integral converges to the corresponding one with $\phi_n$ replaced by $\phi$.

\vspace{0.3cm}

For the other two terms, consider the function

\begin{align}
\notag f(\theta)&=-tT(q)\phi_n'(t)+\phi_n\left(t\frac{\sin^{2q}\theta}{T_n(q)}\right) \phi_n'(t_{n,c}) t\cos^{2q}\theta\\
&\notag +\phi_n(t\cos^{2q}\theta)\phi_n'(t_{n,s})t \sin^{2q}\theta.
\end{align}

Noting that as $\phi_n''$ is uniformly bounded (as $q<1$), we see that as $\theta\to 0$,

\begin{equation}
f(\theta)=\mathcal{O}(|\theta|^{\min(2,2q)})+\mathcal{O}\left(\frac{b}{n}\right),
\end{equation}

\noindent where both bounds are uniform in $n$. Thus making use of Lemma \ref{sing} once again,

\begin{align}
\notag \E&\Bigg(\phi_n\left(t\frac{\sin^{2q}\theta}{T_n(q)}\right) \phi_n'(t_{n,c}) t\cos^{2q}\theta+\phi_n(t\cos^{2q}\theta)\phi_n'(t_{n,s})t \sin^{2q}\theta\Bigg)\\
&=tT(q)\phi_n'(t)+\mathcal{O}\left(\frac{b}{n}\right),
\end{align}

\noindent for some $\alpha>1$. We conclude that

\begin{align}
\notag &\frac{n}{b}(\phi_{n+1}(t)-\phi_n(t))\\
&=\frac{1}{2}\int_0^{\frac{\pi}{4}}\frac{d\theta}{\sin^2 2\theta}(\phi_n(t\cos^{2q}\theta)\phi_n(t\sin^{2q}\theta)-\phi_n(t))+tT(q)\phi_n'(t) \\
& \notag+\mathcal{O}\left(\left(\frac{b}{n}\right)^{\alpha-1}\right),
\end{align}

\noindent for some $\alpha>1$. Moreover, the bound is uniform in $n$.

\vspace{0.3cm}

Now the left side converges as $n\to\infty$ (the integral by the dominated convergence theorem and the bounds we pointed out, and the derivative by the dominated convergence theorem when we write it as $-\E(\Pi_n e^{-t\Pi_n})$). The left side can only converge to zero as otherwise $\phi_n$ would diverge logarithmically, which is impossible. Thus we see that $\phi$ satisfies the equation in the claim.

\end{proof}

\section{Discussion and open problems}

In this section we point out some issues we were not able to address and discuss some possible further questions related to the problem studied in this paper.

\subsection{Connection to band matrices}
In Section \ref{sec:model} (see Remark \ref{imbrie}) we discussed a possible approach to going beyond the LME approximation for the band matrix model. A natural question would be to find an actual random matrix model for which a LME-type recursion becomes exact. For the analogous problem of Gaussian multiplicative chaos measures, i.e. measures whose density is given by exponential of the Gaussian Free Field the role of the LEM recursion is played by the Mandelbrot cascades resulting from replacing the GFF by a hierarchical version thereof. Note that the critical random matrix models with hierarchical structure do exist \cite{FOR2009,MG2011}, but that structure itself is not enough
to make the LME recursion exact. Is there a hierarchical random matrix that produces a LME recursion?
\vspace{0.3cm}

\subsection{Convergence for $1<q<q_c$}
Based on the analogy with the branching random walk
and the bounds on the moments $\E(\Pi_n(q)^p)$ for some $p>1$, it is natural to expect that $\Pi_n(q)$ converges also in this regime. For $q<1$, our approach centered around the fact that all moments existed and we were able to describe their limiting values through a recursion.

For $q\geq 1$, all moments will no longer exist, but the stationary equation \eqref{eq:stat} still makes sense. For branching Brownian motion Bramson \cite{bram} (see also \cite{cw,ds} for the case of branching random walks) proved convergence by perturbing around stationary solution to the stationary equation. However, in that case monotonicity due to maximum principle was an
important ingredient and we have not been able to prove such monotonicity in the LME case.

\subsection{Properties of the limiting random variable} In the case of the branching random walk a lot is known about
the limiting random variable:
it is almost surely positive, it has (in the Gaussian case) all negative moments and the tail of its distribution is known. In our case, our moment estimates suggest that the tail of the distribution might be similar i.e. we conjecture that for $\frac{1}{2}<q<q_c$ the limit
\begin{equation}
\lim_{x\to\infty}x^{p^*(q)}\mathbb{P}(\Pi(q)\geq x)
\end{equation}
exists.
Moreover, the stationary equation suggests that if one writes $\Pi(q)$ for the limiting random variable, one should have $\mathbb{P}(\Pi(q)=0)=\phi(\infty)=0$ and in analogy with BRW it is natural to conjecture  $\Pi(q)$ almost surely positive and has all  negative moments.

\subsection{$q\geq q_c$ and the freezing transition} The analogy with  branching random walk leads one to ask whether $\Pi_n(q_c)$ converges to zero and if so whether a nontrivial limit can be obtained with additional normalization.  In the study of the branching random walk, $\Pi_n(q_c)$ converges to zero under the case of a "second order phase transition" or the so-called "boundary case" (see e.g. \cite{dl,big}) which in our model should correspond to $q_cT'(q_c)-T(q_c)=0$. There are branching random walks where this is not satisfied (though the critical point is characterized slightly differently) and their behavior for $q\geq q_c$ is significantly different from the "Gaussian case" see e.g. \cite{bhm,ab}. In this case, $\Pi_n(q_c)$ converges to a non-trivial random variable without any further normalization.  Thus for $q_c$, the central question is does $\Pi_n(q_c)$ converge to a non-trivial random variable or is the correct quantity to study $-\partial_q \Pi_n(q_c)$ (which can be defined e.g. using the tree construction).

For the branching random walk  the case corresponding to $q>q_c$ is controlled by the behavior of the maximum of the walk. In our case, using the tree construction, we can  define a branching random walk which however  is inhomogeneous and also the branches repulse each other. This leads to peculiar behavior. Indeed,  consider the law of $(\log \sin^2\theta_n, \log \cos^2\theta_n)$. With probability of order $1-\frac{C}{n}$,  $\theta_n$ is of order $\frac{1}{n}$ so $\log \sin^2\theta$ is of order $-\log n$ while $\log \cos^2\theta$ is of order $-\frac{1}{n^2}$. With probability of order $\frac{1}{n}$, $\theta_n$ is of order one so that both $\log\sin^2\theta_n$ and $\log\cos^2\theta_n$ are also of order one. This picture where apart from a rare event, the path of the maximum is deterministic is rather different from the standard branching random walk picture.
While studying this maximum is a non-trivial question, we refer to \cite{fg}, where it is argued that  the scaling of the maximum is essentially the same as for a branching random walk with independent Gaussian increments.

Finally, it is natural to ask is there a freezing transition phenomenon in the model. If there is a normalization under which $\Pi_n(q)$ converges to say $\widetilde{\Pi}(q)$ for $q>q_c$, is this given in terms of the relevant critical random variable and a stable L\'evy subordinator?

\subsection{Geometry of the eigenvectors}
The eigenvectors of the random matrix are given by
$$
\psi_i^N=O_n\psi_i^0,\ \ i=0,\dots N-1
$$
where $(\psi_i^0)_j=\delta_{ij}$.
As in the derivation of the recursion \eqref{lemrec} the recursion   \eqref{evrec} on the resonant set can be used to arrive to the following recursion for  the eigenvector. Let $\theta_{n,i}^\pm$ be the angles determined by the matrices $M_{ij }^n$ of \eqref{reson} with $j=i\pm(n+1)$. Fixing $i$ either $\{i,i+n+1\}$ or  $\{i,i-n-1\}$  belongs to the resonant set $R_n$, with equal probability. Let $\sigma_i^n\in\{0,1\}$ be a Bernouilli $(\hf,\hf)$ random variable. The model for the eigenvector recursion is
\baq\nonumber
\psi_i^{n+1}&=&(\sigma_i^n\cos \theta_{n,i}^+ +(1-\sigma_i^n)\cos \theta_{n,i}^-)\psi_i^{n}\\
&&+\sigma_i^n\sin \theta_{n,i}^+\psi_{i+n+1}^{n}+(1-\sigma_i^n)\sin \theta_{n,i}^-\psi_{i-n-1}^{n}\nonumber
\eaq
 This leads to the representation
 \beq\label{rw}
\psi_i^N=\sum_\omega\prod_{n=0}^{N-1}p_n(\omega_n,\omega_{n+1})
\eeq
where the sum is over all paths $\omega=(\omega_0,\omega_1,\dots,\omega_{N})$ on $\Z_N$ with  $\omega_0=i$, $\omega_{N}=j$ and
 $\omega_{n+1}\in\{\omega_n, \omega_n+n+1, \omega_n-n-1\}$
 and
$$
p_n(\omega_n,\omega_{n+1})=\left\{\begin{array}{ll}
\sigma_{\omega_n}^n\cos \theta_{n,{\omega_n}}^+ +(1-\sigma_{\omega_n}^n)\cos \theta_{n,{\omega_n}}^-
  & \omega_{n+1}=\omega_n\\
  \sigma_{\omega_n}^n\sin \theta_{n,{\omega_n}}^+& \omega_{n+1}=\omega_n
  + n+1
   \\
  (1-\sigma_{\omega_n}^n)\sin \theta_{n,{\omega_n}}^-
 &  \omega_{n+1}=\omega_n- n-1
 \end{array} \right.
$$
and $\theta_{n,i}$, $i\in\Z_N$ are i.i.d. copies of $\theta_n$.\\
 It remains a challenge to use such a description for a more detailed understanding of the structure of eigenvectors.


\begin{thebibliography}{99}
\bibitem{aid}  E. A\"id\'ekon: Convergence in law of the minimum of a branching random walk. Ann. Probab. {\bf 41} (2013), no. 3A, 1362–1426.
\bibitem{as} E. A\"id\'ekon and Z. Shi: The Seneta-Heyde scaling for the branching random walk. Ann. Probab. {\bf 42} (2014), no. 3, 959–993.
\bibitem{ABH} L.-P. Arguin, D. Belius, A.J. Harper.  Maxima of a randomized Riemann zeta function,
and branching random walks. Arxiv:1506.00629
\bibitem{ab} N. Attia, J. Barral: Hausdorff and packing spectra, large deviations, and free energy for branching random walks in $R^d$. Comm. Math. Phys. {\bf 331} (2014), no. 1, 139–187.
\bibitem{bkm} Bacry E., Kozhemyak, A., Muzy J.-F.: Continuous cascade models for asset returns,
J. Econom. Dynam. Control {\bf 32} (2008), no. 1, 156–199.
\bibitem{bm} J. Barral and B. Mandelbrot: Fractional multiplicative processes. Ann. Inst. H. Poincar\'e. {\bf 45} (2009) no. 4, 1116-1129.
\bibitem{bhm}J. Barral, Y. Hu, and T. Madaule: The minimum of a branching random walk outside the boundary case. arXiv:1406.6971
    \bibitem{brv} J. Barral, R. Rhodes, and V. Vargas: Limiting laws of supercritical branching random walks. C. R. Math. Acad. Sci. Paris {\bf 350} (2012), no. 9-10, 535–538.
\bibitem{big} J. D. Biggins: Martingale convergence in the branching random walk. Jnl. Appl. Probab {\bf 14} (1977), 25-37.
\bibitem{bk} J.D. Biggins and A.E. Kyprianou: Measure change in multitype branching. Adv. Appl. Probab. {\bf 36}, 544-581, 2004.
\bibitem{BG2011} E. Bogomolny, O. Giraud: Eigenfunction entropy and spectral
compressibility for critical random matrix ensembles. Phys. Rev.
Lett. {\bf 106} (2011) 044101.
\bibitem{bram} M. Bramson: Convergence of solutions of the Kolmogorov equation to travelling waves. Mem. Amer. Math. Soc. {\bf 44} (1983), no. 285, iv+190 pp.
\bibitem{tailc} D. Buraczewski: On tails of fixed points of the smoothing transform in the boundary case. Stochastic Process. Appl. {\bf 119} (2009), no. 11, 3955–3961
\bibitem{cld} D. Carpentier and P. Le Doussal: Glass transition of a particle in a random potential, front selection in non linear RG and entropic phenomena in Liouville and SinhGordon models Phys. Rev. E {\bf 63}, 026110 (2001).
\bibitem{ds} B. Derrida and H. Spohn: Polymers on disordered trees, spin glasses and travelling waves. J. Stat. Phys. {\bf 51}, 817-840, 1988.
\bibitem{DRZ2015} J.~Ding, R.~Roy and O.~Zeitouni: Convergence of the centered maximum of log-correlated Gaussian fields. arXiv:1503.04588 (2015)
\bibitem{dur} R. Durrett: Probability: theory and examples. Fourth edition. Cambridge Series in Statistical and Probabilistic Mathematics. Cambridge University Press, Cambridge, 2010. x+428 pp. ISBN: 978-0-521-76539-8.
\bibitem{dl} R. Durrett and T.M. Liggett:  Fixed points of the smoothing transformation. Z. Wahrsch. Verw. Gebiete {\bf 64} (1983), no. 3, 275–301.
\bibitem{EMrev} F. Evers, A.D. Mirlin: Anderson transitions. Rev Mod Phys {\bf 80} 2008,
1355–417.
\bibitem{FyoRev2010} Y.V. Fyodorov: Multifractality and freezing phenomena in random energy landscapes:
An introduction. Physical A {\bf 389 } (2010)  4229-4254.
\bibitem{FB08} Y.~V. Fyodorov and J.~P. Bouchaud: Freezing and extreme-value statistics in a random energy model with
  logarithmically correlated potential. J. Phys. A: Math. Theor. {\bf 41} (2008), no. 37, 372001, 12pp
\bibitem{FLDR2009} Y.~V. Fyodorov, P.~Le~Doussal, and A.~Rosso: Statistical Mechanics of Logarithmic REM: Duality, Freezing and Extreme Value Statistics of 1/f Noises Generated by Gaussian Free Fields.
J. Stat. Mech. Theory Exp. {\bf 2009} (2009), no. 10, P10005, 32 pp
\bibitem{FLDR2012} Y.~V. Fyodorov, P.~Le~Doussal, and A.~Rosso: Counting Function Fluctuations and Extreme Value Threshold in Multifractal Patterns: The Case Study of an Ideal 1/f Noise.
J. Stat. Phys. {\bf 149} (2012), 898--920
\bibitem{fg} Y. V. Fyodorov and O. Giraud: High values of disorder-generated multifractals and logarithmically correlated processes.  Chaos, Solitons $\&$ Fractals {\bf 74} (2015) 15–26
\bibitem{FyoKeat2014} Y.~V. Fyodorov and J.P. Keating: Freezing Transitions and Extreme Values: Random Matrix
  Theory, $\zeta(1/2+it)$ and Disordered Landscapes.
Philos. Trans. R. Soc. Lond. Ser. A Math. Phys. Eng. Sci. {\bf 372} (2014) no. 2007, 20120503, 32pp
\bibitem{FOR2009} Y.V. Fyodorov, A. Ossipov and A. Rodriguez: The Anderson localization transition
and eigenfunction multifractality in an ensemble of ultrametric random
matrices. J. Stat. Mech. {\bf 2009} (2009) L12001
\bibitem{tailsubc} Guivarc'h, Y: Sur une extension de la notion de loi semi-stable. Ann. Inst. H. Poincaré Probab. Statist. {\bf 26} (1990), no. 2, 261–285.
\bibitem{Imbrie} J. Z. Imbrie, Multi-Scale Jacobi Method for Anderson Localization. Comm. Math. Phys. 341 (2016), no. 2, 491–521.
\bibitem{kah} J.-P. Kahane: Sur le modele de turbulence de Benoit Mandelbrot., C.R. Acad.
Sci. Paris, {\bf  278} (1974), 567–569.
\bibitem{kp} J.-P. Kahane and J. Peyri\`ere: Sur certaines martingales de B. Mandelbrot. Adv. Math. {\bf 22}, 131-145, 1976.
    \bibitem{Kravtsov} V.E. Kravtsov. Random matrix representations of critical statistics. arXiv:0911.0615
\bibitem{lev1} L.S. Levitov, Delocalization of vibrational modes caused by electric dipole interaction. Phys. Rev. Lett. {\bf 64}, 547, 1990.
\bibitem{lev2} L.S. Levitov, Critical Hamiltonians with long range hopping, Annalen der Physik, {\bf 8}, Issue 7, pp.697-706, 1999.
\bibitem{rl} R. Lyons:  A simple path to Biggins' martingale convergence for branching random walk. Classical and modern branching processes (Minneapolis, MN, 1994), 217–221, IMA Vol. Math. Appl., {\bf 84}, Springer, New York, 1997.
\bibitem{mad} T. Madaule: Convergence in law for the branching random walk seen from its tip. arXiv:1107.2543, 2011.
\bibitem{mandelbrot}  B.B. Mandelbrot: Intermittent turbulence in self-similar cascades, divergence of high
moments and dimension of the carrier, J. Fluid. Mech. {\bf 62} (1974), 331-358.
\bibitem{em} A.D. Mirlin and F. Evers: Multifractality and critical fluctuations at the Anderson transition, Phys. Rev. B {\bf 62}, 7920, 2000.
\bibitem{prbm} A.D. Mirlin, Y.V. Fyodorov, F.-M. Dittes, J. Quezada, and T.H. Seligman: Transition from localized to extended eigenstates in the ensemble of power-law random banded matrices, Phys. Rev. E {\bf 54}, 3221, 1996.
\bibitem{molch} G.M. Molchan: Scaling exponents and multifractal dimensions for independent random cascades, Comm. Math Physics {\bf 179} (1996), 681-702.
\bibitem{MG2011} C. Monthus and T. Garel. A critical Dyson hierarchical model for the Anderson localization transition J. Stat. Mech. (2011) P05005.  doi:10.1088/1742-5468/2011/05/P05005
\bibitem{ps} I.V.Ponomarev and P.G.Silvestrov: Coherent propagation of interacting particles in a random potential: the Mechanism of enhancement Phys.Rev. B v. 56, 3742-3759 (1997).
\bibitem{gmc} R. Rhodes and V. Vargas: Gaussian multiplicative chaos and applications: a review. Probab. Surv. 11 (2014) 315-392.
\bibitem{rderev} U. R\"osler and L. R\"uschendorf: The contraction method for recursive algorithms. Algorithmica {\bf 29} 3–33, 2001.
\bibitem{cw} C. Webb: Exact Asymptotics of the Freezing Transition of a Logarithmically Correlated Random Energy Model, J. Stat. Phys. {\bf 145}, 1595-1619, 2011.
\end{thebibliography}
\end{document}